\newtheorem{theorem}{Theorem}
\newtheorem{lemma}{Lemma}
\newtheorem{conjecture}{Conjecture}
\DeclareMathOperator*{\E}{\mathbb{E}}
\newcommand{\per}{\mathrm{Per}}
\newcommand{\haf}{\mathrm{Haf}}
\let\vec\mathbf
\DeclarePairedDelimiter\ceil{\lceil}{\rceil}
\DeclarePairedDelimiter\bkt{[}{]}		
\DeclarePairedDelimiter\paren{(}{)}
\begin{document}

\title{Transition of Anticoncentration in Gaussian Boson Sampling}

\author{Adam Ehrenberg}
\affiliation{Joint Center for Quantum Information and Computer Science, NIST/University of Maryland College Park, Maryland 20742, USA}
\affiliation{Joint Quantum Institute, NIST/University of Maryland College Park, Maryland 20742, USA}

\author{Joseph T. Iosue}
\affiliation{Joint Center for Quantum Information and Computer Science, NIST/University of Maryland College Park, Maryland 20742, USA}
\affiliation{Joint Quantum Institute, NIST/University of Maryland College Park, Maryland 20742, USA}

\author{Abhinav Deshpande}
\affiliation{IBM Quantum, Almaden Research Center, San Jose, California 95120, USA}

\author{Dominik Hangleiter}
\affiliation{Joint Center for Quantum Information and Computer Science, NIST/University of Maryland College Park, Maryland 20742, USA}

\author{Alexey V. Gorshkov}
\affiliation{Joint Center for Quantum Information and Computer Science, NIST/University of Maryland College Park, Maryland 20742, USA}
\affiliation{Joint Quantum Institute, NIST/University of Maryland College Park, Maryland 20742, USA}
\date{\today}

\begin{abstract}
Gaussian Boson Sampling is a promising method for experimental demonstrations of quantum advantage because it is easier to implement than other comparable schemes. 
While most of the properties of Gaussian Boson Sampling are understood to the same degree as for these other schemes, we understand relatively little about the statistical properties of its output distribution.
The most relevant statistical property, from the perspective of demonstrating quantum advantage, is the \emph{anticoncentration} of the output distribution as measured by its second moment. 
The degree of anticoncentration features in arguments for the complexity-theoretic hardness of Gaussian Boson Sampling, and it is also important to know when using cross-entropy benchmarking to verify experimental performance. 
In this work, we develop a graph-theoretic framework for analyzing the moments of the Gaussian Boson Sampling distribution. 
Using this framework, we show that Gaussian Boson Sampling undergoes a transition in anticoncentration as a function of the number of modes that are initially squeezed compared to the number of photons measured at the end of the circuit. When the number of initially squeezed modes scales sufficiently slowly with the number of photons, there is a lack of anticoncentration. 
However, if the number of initially squeezed modes scales quickly enough, the output probabilities anticoncentrate weakly.
\end{abstract}

\maketitle
There is a hope that quantum computation will be able to outperform classical computation on certain tasks. 
In particular, there has been a recent explosion of interest in so-called sampling problems given the strong theoretical evidence for an exponential speedup of quantum algorithms over the best possible classical algorithms; see Ref.~\cite{hangleiterComputationalAdvantageQuantum2023} for an overview. 
Aaronson and Arkhipov introduced one of the most deeply studied sampling frameworks in their seminal work on Boson Sampling \cite{aaronsonComputationalComplexityLinear2013}. 
The Boson Sampling task is to approximately sample from the outcome distribution of measuring $n$ single photons in $m$ optical modes transformed by a Haar-random linear-optical unitary, which can be implemented as a random network of beamsplitters and phase shifters \cite{reck_experimental_1994}. 
\textcite{aaronsonComputationalComplexityLinear2013} focused on single-photon input states, but these can be challenging to produce experimentally \cite{wang_boson_2019} because existing single-photon sources are not sufficiently reliable to avoid an exponential amount of post-selection \cite{brod_photonic_2019}. Therefore, there has been an interest in generalizing the original Boson Sampling setup to other input states.

The currently most feasible generalization is Gaussian Boson Sampling (GBS) \cite{lundBosonSamplingGaussian2014,rahimi-keshariWhatCanQuantum2015,hamiltonGaussianBosonSampling2017,kruseDetailedStudyGaussian2019a}, which uses Gaussian input states. 
These states are significantly easier to prepare reliably than single-photon states.
At the same time, similar statements can be made about the hardness of sampling from the corresponding output distribution \cite{kruseDetailedStudyGaussian2019a,
deshpandeQuantumComputationalAdvantage2022a,grierComplexityBipartiteGaussian2022,chabaudResourcesBosonicQuantum2023b}, 
and several large-scale GBS experiments have been performed recently \cite{zhongQuantumComputationalAdvantage2020a, zhongPhaseProgrammableGaussianBoson2021a,madsen_quantum_2022,deng_gaussian_2023}. 

Broadly speaking, the hardness of Boson Sampling is based on the connection between output probabilities and the permanent, which is, classically, $\#\mathsf{P}$-hard to compute exactly~\cite{valiant_complexity_1979}. 
Similarly, the hardness of GBS arises from the fact that output probabilities are controlled by a generalization of the permanent called the hafnian; while the permanent counts the number of perfect matchings in a weighted bipartite graph, the hafnian counts the number of perfect matchings in an arbitrary weighted graph~\cite{barvinokCombinatoricsComplexityPartition2016}. Because the hafnian generalizes the permanent, it is also difficult to compute classically. 

However, the complexity of classically computing an individual output probability defined in terms of the permanent or the hafnian is not itself sufficient to prove hardness of sampling from the overall probability distributions. 
The standard hardness argument based on Stockmeyer's algorithm \cite{stockmeyer_complexity_1983,hangleiterComputationalAdvantageQuantum2023} requires that outcome probabilities of \emph{random} Boson Sampling instances be hard to \emph{approximate}. 
Jointly with provable hardness of nearly exactly computing output probabilities~\cite{deshpandeQuantumComputationalAdvantage2022a}, anticoncentration of the outcome probabilities serves as evidence for this. 
Intuitively, if most outcome probabilities are comparable to the uniform probability, then a good classical sampling algorithm needs a very precise idea of each probability's relative magnitude because all of them are important. 
Anticoncentration quantifies this idea as, most concisely, the outcome-collision probability (i.e.\ the probability of getting the same outcome from two independent samples) of the GBS distribution averaged over the choice of linear-optical unitary and normalized by the size of the sample space \cite[][Sec. D]{hangleiterComputationalAdvantageQuantum2023}. 
While a weak form of anticoncentration holds in Boson Sampling \cite{aaronsonComputationalComplexityLinear2013}, under what conditions anticoncentration holds in GBS is an open question. 

In this work, we analyze the moments of GBS in the photon-collision-free limit. 
In this limit, the output distribution is dominated by outcomes with at most a single photon in each mode, and the moments of GBS approximately reduce to moments of squared hafnians of Gaussian random matrices. 
We show that evaluating those moments reduces to counting the connected components of certain graphs.
Using this perspective, we find a closed-form expression for the first moment and derive analytical properties of the second moment. 
We then identify a transition in anticoncentration in GBS:
when the number of initially squeezed modes is large enough compared to the measured number of photons~$n$, a weak version of anticoncentration holds where the normalized average outcome-collision probability scales as~$\sqrt n$. 
However, when sufficiently few modes are initially squeezed, there is a lack of anticoncentration, as the normalized second moment scales exponentially in~$n$. 

The rest of the paper proceeds as follows. We first provide background information and set up the system and problem of interest. We then derive the graph-theoretic formalism for computing the first moment of the output probabilities. 
We proceed to discuss how to apply the formalism to calculate certain properties of the second moment. These results let us finally prove the transition in anticoncentration.

\begin{figure}
\includegraphics{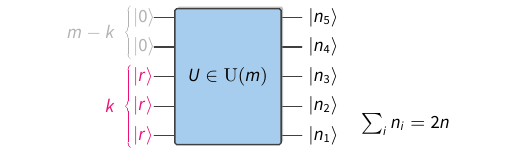}
    \caption{\label{fig:gbs} In Gaussian Boson Sampling (GBS), $k$ out of $m$ modes are prepared in single-mode squeezed states with parameter $r$, while the remaining modes are prepared in the vacuum state $\ket 0$. The modes are then transformed by a Haar random linear-optical unitary $U$ and measured in the Fock basis with outcome counts $n_i$ summing to $2n$. 
    }
\end{figure}

\textit{Setup.---}We consider a photonic system with $m$ modes that is transformed by a Haar-random linear optical unitary $U \in \mathrm{U}(m)$ acting on the modes of the system, see \cref{fig:gbs}. 
In the paradigmatic version of GBS \cite{hamiltonGaussianBosonSampling2017,kruseDetailedStudyGaussian2019a}, the first $k$ of the $m$ modes are prepared in single-mode squeezed states with equal squeezing parameter $r$, while the remaining $m-k$ modes are prepared in the vacuum state.
After applying $U$, all $m$ modes are measured in the Fock basis.

Reference~\cite{hamiltonGaussianBosonSampling2017} proves that, given a unitary $U$, the probability of obtaining an outcome count vector $\vec{n} = (n_1, n_2, \ldots,  n_m) \in \mathbb N_{0}^m$ with total photon count $2n = \sum_{i=1}^{m} n_i $ is given by
\begin{equation}\label{eqn:gbs_probability}
    P_U(\vec n) = \frac{\tanh^{2n} r}{\cosh^{k} r}\abs{\haf(U^{\top}_{1_{k},\vec{n}}U_{1_{k},\vec{n}})}^{2}, 
\end{equation}
where $U_{1_k,\vec n}$ denotes the $k \times 2n$ submatrix of $U$ given by its first $k $ rows and the columns selected according to the nonzero entries of $\vec n$ each copied $n_i$ times \footnote{
Note that squeezed states are supported only on even Fock states, so the total photon count $2n$ must always be even.}.
Moreover,
\begin{equation}\label{eqn:hafnian}
    \haf(A) = \frac{1}{2^{n}n!}\sum_{\sigma \in S_{2n}}\prod_{j = 1}^{n}A_{\sigma(2j-1),\sigma(2j)}
\end{equation}
denotes the hafnian of a $2n\times2n$ symmetric matrix $A$ \footnote{Other equivalent definitions exist, but \cref{eqn:hafnian} is the most convenient one for our purposes.}.

We will work in the regime in which the output states are, with high probability, (photon-)collision-free, meaning that $n_{i} \in \{0,1\}$ for all $i$. A sufficient condition for this to be the case is that the expected number of photons  $\E[2n] = k \sinh^{2}(r) = o(\sqrt{m})$.
In this regime, Ref.~\cite{deshpandeQuantumComputationalAdvantage2022a} provides evidence that, for any fixed $n = o(\sqrt m)$, the distribution over submatrices is well-captured by a generalization of the circular orthogonal ensemble (COE) \footnote{Note that, strictly speaking, the conjecture is only formulated for the regime $n \leq k$ in Ref.~\cite{deshpandeQuantumComputationalAdvantage2022a}. 
However, the evidence provided there for the case $k=n$---Ref.~\cite{aaronsonComputationalComplexityLinear2013} proves that $n \times n$ submatrices of Haar-random unitaries are approximately Gaussian---clearly also holds for the case $k \leq n$.}. 
\begin{conjecture}[Hiding \cite{deshpandeQuantumComputationalAdvantage2022a}] \label{con:gaussian}
For any $k$ such that $1 \leq k \leq m$ and $2n=o(\sqrt{m})$, the distribution of the symmetric product $U^{\top}_{1_{k},\vec{n}}U_{1_{k},\vec{n}}$ of submatrices of a Haar-random $U \in \mathrm{U}(m)$ closely approximates in total variation distance the distribution of the symmetric product $X^{\top}X$ of a complex Gaussian matrix $X~\sim ~\mathcal{N}(0,1/m)_{c}^{k \times 2n}$ with mean $0$ and variance $1/m$.
\end{conjecture}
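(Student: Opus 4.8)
The plan is to reduce the statement to a Gaussian-approximation result for \emph{truncations} of Haar-random unitaries and then transfer it through the symmetric-product map. The key structural observation is that $U^{\top}_{1_k,\vec n}U_{1_k,\vec n}$ is a fixed measurable function $f(A)=A^{\top}A$ of the $k\times 2n$ submatrix $A=U_{1_k,\vec n}$, and likewise $X^{\top}X=f(X)$. Since total variation distance obeys the data-processing inequality, $d_{\mathrm{TV}}(\mathrm{law}(f(A)),\mathrm{law}(f(X)))\le d_{\mathrm{TV}}(\mathrm{law}(A),\mathrm{law}(X))$, it suffices to compare the truncated unitary $A$ (rescaled by $\sqrt m$) with an i.i.d.\ standard complex Gaussian matrix. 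In the regime $k\cdot 2n=o(m)$ this is exactly a complex Jiang-type theorem: the top-left $p\times q$ block of a Haar-random unitary, scaled by $\sqrt m$, converges in total variation to i.i.d.\ complex normals once $pq=o(m)$. This settles the case $k\le n$ highlighted in the footnote (where $k\cdot 2n\le 2n^{2}=o(m)$ since $n=o(\sqrt m)$), and more generally every $k$ with $k\cdot 2n=o(m)$, with no further work.

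The remaining, and genuinely conjectural, difficulty is the regime of large $k$ (up to $k=m$ with $2n=o(\sqrt m)$), where $k\cdot 2n$ can exceed $m$ and the submatrix is no longer jointly Gaussian. Here I would build an explicit coupling rather than appeal to submatrix Gaussianity. Generate the Haar frame by Gram--Schmidt orthonormalization of i.i.d.\ columns $g_1,\dots,g_{2n}\sim\mathcal N(0,I_m/m)_{c}$, and couple $X$ to the \emph{same} Gaussians by truncation, $X=G_{1_k}$ (the first $k$ rows of $G=[g_1\,\cdots\,g_{2n}]$); by construction $X$ has exactly the conjectured law $\mathcal N(0,1/m)_{c}^{k\times 2n}$ and $X^{\top}X=G^{\top}_{1_k}G_{1_k}$. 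Writing $W=GR^{-1}$ for the orthonormalized frame with triangular factor obeying $R^{\dagger}R=G^{\dagger}G$, the corrections are controlled by $\lVert G^{\dagger}G-I\rVert=O(\sqrt{2n/m})=o(1)$, which holds with overwhelming probability for $2n=o(\sqrt m)$. Since $\lVert W^{\top}_{1_k}W_{1_k}-G^{\top}_{1_k}G_{1_k}\rVert\le\lVert G_{1_k}\rVert^{2}\,O(\lVert R-I\rVert)$ and $\lVert G_{1_k}\rVert\le\lVert G\rVert=O(1)$ for every $k$, the two symmetric products agree up to a small operator-norm error on a common probability space, \emph{uniformly} in $k$.

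The hard part will be converting this closeness \emph{in probability} into closeness in \emph{total variation}, because a small displacement of a random matrix need not have small TV distance from it. Two features make this step delicate. First, when $k<2n$ the matrix $X^{\top}X$ has rank at most $k$ and is supported on a lower-dimensional subvariety of symmetric $2n\times 2n$ matrices, so the comparison must be carried out for the induced degenerate densities on the correct manifold rather than in the ambient space. Second, one must exhibit enough regularity of the density of $X^{\top}X$ --- for instance a local Lipschitz or anticoncentration bound --- to guarantee that the $o(1)$ operator-norm displacement moves probability mass by only an $o(1)$ amount. I would attempt this by smoothing (convolving with a small Gaussian and bounding the smoothing error), with a Weingarten-calculus moment comparison as a robust fallback: all joint moments of the entries of $U^{\top}_{1_k}U_{1_k}$ and $X^{\top}X$ agree to leading order in $1/m$, giving convergence in distribution, although upgrading moment agreement to a total-variation bound would itself demand the same density regularity. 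Controlling the degeneracy and making the regularity estimate quantitative uniformly in $k$ is where I expect the real work, and the reason the statement is posed as a conjecture rather than a theorem.
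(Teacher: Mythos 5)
There is no proof in the paper for you to be compared against: the statement you were given is Conjecture~\ref{con:gaussian} (Hiding), which the paper explicitly does \emph{not} prove. It is imported from Ref.~\cite{deshpandeQuantumComputationalAdvantage2022a}, assumed throughout (``We work under the assumption that Conjecture~\ref{con:gaussian} is true''), and the only support offered is evidence: a footnote noting that for $k\leq n$ the claim is made plausible by Aaronson--Arkhipov's theorem that $n\times n$ submatrices of Haar unitaries are approximately Gaussian \cite{aaronsonComputationalComplexityLinear2013}, together with a sharpened but still conjectural formal restatement (with the explicit scaling $m\geq n^{2}/\delta$) in the SM. So the honest ground truth is that no proof exists, in this paper or in the cited literature.

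Judged on its own terms, your proposal is a reasonable research program, not a proof --- and you say so yourself. The sound portion is the first step: monotonicity of total variation under the map $A\mapsto A^{\top}A$ plus a Jiang-type theorem \cite{jiang_how_2006,jiang_entries_2009} does rigorously yield the conjecture in the regime $k\cdot 2n = o(m)$, which is a genuine partial result going somewhat beyond the evidence the paper cites (though you should verify that the complex-unitary analogue of the sharp rectangular-block theorem is actually available; the published sharp results are for the orthogonal group). The large-$k$ coupling, however, only delivers closeness of the two symmetric products in operator norm with high probability on a common probability space, and, as you correctly flag, closeness in probability under a coupling does not bound total variation: two laws can be mutually singular while concentrated within distance $o(1)$ of each other, and for $k<2n$ both laws live on the measure-zero variety of rank-$\leq k$ symmetric matrices, so the comparison must be made for the degenerate densities on that variety. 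Converting the $O(\sqrt{2n/m})$ displacement into a TV bound, uniformly in $k$ up to $m$, is precisely the content of the conjecture; neither Gaussian smoothing nor Weingarten moment-matching is known to supply the density regularity this requires. In short: your attempt contains a correct partial result and a correct diagnosis of the obstruction, but it does not (and could not, at this level of detail) settle the statement, which remains open in the paper as well.
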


We work under the assumption that Conjecture~\ref{con:gaussian} is true. 
Fixing the measured number of photons $2n$, the normalized average (outcome-)collision probability, which quantifies anticoncentration, can be written as $|\Omega_{2n}| \E_{U \in U(m)} [\sum_{\vec n \in \Omega_{2n}} P_U(\vec n)^2] $, where $|\Omega_{2n}|$ is the size of the photon-non-collisional sample space of $2n$ photons in $m$ modes, which is the dominant space of outputs when $n = o(\sqrt{m})$. 
Conjecture~\ref{con:gaussian} implies that, with respect to random choices of $U$, all outcomes are equally distributed over the unitaries, the so-called hiding property. 
This implies that the inverse size of the  sample space is given by the first moment $\mathbb E_U [P_U(\vec n)]$. See \cref{sec:asymptotics} of the Supplemental Material (SM) for more details.  
Under Conjecture~\ref{con:gaussian}, the anticoncentration property thus reduces to computing the moments
\begin{equation}\label{eqn:moment_def}
    M_t(k,n) \coloneqq \E_{X \sim \mathcal G^{k \times 2n}}[|\haf(X^\top X)|^{2t}] 
\end{equation}
of the squared hafnian as a function of the parameters $k$ and $n$ for $t = 1,2$,
where we have abbreviated $\mathcal{N}(0,1)_{c}^{k\times 2n}$ as  $\mathcal{G}^{k\times 2n}$. 
We consider unit variance because rescaling $X$ by $1/\sqrt m$ just leads to an overall prefactor that, like the prefactor in \cref{eqn:gbs_probability}, is irrelevant to the normalized average outcome-collision probability. 
We will phrase our discussion in terms of the inverse of the average collision probability 
\begin{equation} \label{eqn:m2}
m_2(k,n) \coloneqq M_1(k,n)^2/M_2(k,n). 
\end{equation}

\textit{First moment and graph-theoretic formalism.---}We begin by analyzing the (rescaled) first moment $M_1$ of the output probabilities. 
In order to derive our graph-theoretic formalism, we use \cref{eqn:hafnian} to expand the hafnian in \cref{eqn:moment_def} as a sum over permutations of a product of matrix elements. 
From there, the key is to use that the matrix elements are independent complex Gaussian, meaning that $\mathbb E_{X \sim \mathcal G^{k}}[X_i X^{*}_j] = \delta_{ij}$ and $\mathbb E_{X \sim \mathcal G^{k}}[X_i X_j]=\mathbb E_{X \sim \mathcal G^{k}}[X^{*}_i X^{*}_j]=0$. 
This yields
\begin{equation}\label{eqn:first_moment_delta}
 M_1(k,n) =  \frac{(2n)!}{(2^{n}n!)^{2}}\sum_{\tau \in S_{2n}}\sum_{\{o_{i}\}_{i=1}^{n}}^{k}\prod_{j=1}^{n}\delta_{o_{\ceil*{\frac{\tau(2j-1)}{2}}}o_{\ceil*{\frac{\tau(2j)}{2}}}}. 
\end{equation}
Let us briefly discuss \cref{eqn:first_moment_delta}, see the SM for details. The sum over $\tau \in S_{2n}$ and the product over index $j$ come from \cref{eqn:hafnian}; the sum over the indices $o_i \in [k] \coloneqq \{1, 2, \ldots, k\}$ is due to an expansion of $X^\top X$ as a matrix product. 
Note that, when $\tau(2j-1)$ and $\tau(2j)$ form a tuple $(2\ell-1, 2\ell)$, then the Kronecker $\delta$ always equals $1$ for index $o_{\ell}$, such that summing over $o_{\ell}$ yields a factor of $k$. 
When $\tau(2j-1)$ and $\tau(2j)$ do not form such a tuple, we get a nontrivial relationship between indices that decreases the number of independent degrees of freedom, thus decreasing the number of factors of $k$ in the final answer. Therefore, to evaluate this expression, one must determine the number of ``free indices'' over all the permutations in $S_{2n}$. We accomplish this with our graph-theoretic approach. 

\begin{figure}
    \centering
    \includegraphics{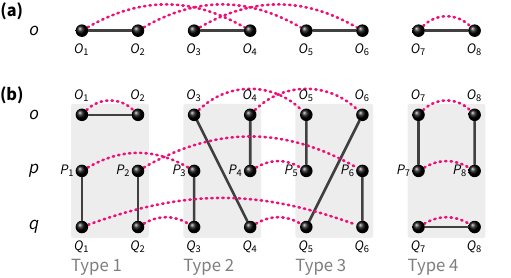}
    \caption{Examples of graphs used to calculate the (a) first and (b) second moments of GBS outcome probabilities. (a) $G \in \mathbb{G}^{1}_{4}$. There are eight vertices $O_{1}$ to $O_{8}$ representing the index $o$ (labeled in the left column). The black (solid) edges connect only adjacent pairs, whereas the red (dashed) edges form an arbitrary perfect matching. This graph has two connected components, meaning it contributes $k^{2}$ to the first moment. (b) $G \in \mathbb{G}_{4}^{2}$. The black (solid) edges are, from left to right, type-1, type-2, type-3, and type-4, as denoted by the gray background.  $z = 1 + 0\times 4^{3} + 1 \times 4^{2} + 2 \times 4^{1} + 3 \times 4^{0} = 28$  (this is calculated by converting 0123 from base-4 into base-10 and then adding 1 such that the final result is in $[4^{4}]$). Note that black (solid) edges stay within two adjacent columns. Red (dashed) edges form an arbitrary perfect matching in each row. This graph contributes $k^{5}$ to the second moment, as there are five connected components.}
    \label{fig:first_moment_example}
    \label{fig:second_moment_red_edges}
\end{figure}

Specifically, define a graph $G_{\tau}$ as follows, see \cref{fig:first_moment_example}(a). 
Let $G_{\tau}$ have $2n$ vertices labeled $O_{1}$ through $O_{2n}$. These upper-case vertices are not directly equivalent to the lower-case indices in \cref{eqn:first_moment_delta}---instead, each index $o_{j}$ splits into two vertices $O_{\ell}$ and $O_{\ell'}$ such that $\ceil{\tau(\ell)/2} = j = \ceil{\tau(\ell')/2}$ (in other words, $o_{\lceil\tau(\ell)/2\rceil}$ maps to a vertex $O_{\ell}$). 
Let $G_{\tau}$ have a black edge between $O_{2j-1}$ and $O_{2j}$ for all $j \in [n]$, and a red edge between $O_{\ell}$ and $O_{\ell'}$ if $\ceil{\tau(\ell)/2}=\ceil{\tau(\ell')/2}$. 
These two kinds of edges mimic two types of ways that dependencies in \cref{eqn:first_moment_delta} can be induced through an index $j$. Red edges identify the $\ell$ and $\ell'$ that map to the same value via $\tau$ and the ceiling function, hence red edges identify which vertices came from the same $o$ index. Black edges identify that \cref{eqn:first_moment_delta} has a Kronecker $\delta$ between $o_{\lceil{\tau(2j-1)}/2\rceil}$ and $o_{\lceil{\tau(2j)}/2\rceil}$.

We see, then, that the number of connected components of $G_{\tau}$, $C(G_{\tau})$, is equivalent to the number of free indices in the sum in \cref{eqn:first_moment_delta}. Therefore, 
\begin{equation}\label{eqn:first_moment_G1}
     M_1(k,n) = \frac{(2n)!}{\paren*{2^{n}n!}^{2}} \sum_{\tau\in S_{2n}}k^{C(G_{\tau})}.
\end{equation}

Now, there is a degeneracy where many permutations induce the same final graph. 
Each graph has the same fixed set of black edges and then one of $(2n-1)!!$ possible sets of red edges (this is the number of ways of pairing $2n$ elements when order does not matter). 
For each graph $G$ corresponding to some assignment of the red edges, there are $2^{n}n!$ permutations $\tau$ such that $G_{\tau} = G$. 
Therefore, instead of studying $G_{\tau}$ as instantiated by permutations $\tau$, we study the underlying graphs $G$. Define $\mathbb{G}^{1}_{n}$ to be the set of graphs on $2n$ vertices with one perfect matching defined by the fixed set of black edges and one perfect matching defined by the arbitrary red edges. 
We can thus rewrite $M_1 = (2n-1)!! \sum_{G\in\mathbb{G}^{1}_{n}}k^{C(G)}$ and state our first result. 
\begin{theorem}\label{thm:first_moment}
The sum over graphs in $\mathbb{G}^{1}_{n}$ satisfies 
\begin{equation}\label{eqn:g_first_moment}
   \sum_{G\in\mathbb{G}^{1}_{n}}k^{C(G)} = k(k+2)\dots (k+2n-2),
\end{equation}
and hence $M_1(k,n) = (2n-1)!!(k+2n-2)!!/(k-2)!!$.
\end{theorem}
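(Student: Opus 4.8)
The plan is to prove the identity by a recursion on $n$, exploiting the fact that every graph $G \in \mathbb{G}^1_n$ is a disjoint union of cycles. Since $G$ is the union of two perfect matchings (black and red), every vertex carries exactly one black and one red edge, so $G$ is $2$-regular and decomposes into cycles whose edges alternate in color; each such cycle therefore has even length, and $C(G)$ simply counts these cycles. Writing $F_n(k) \coloneqq \sum_{G \in \mathbb{G}^1_n} k^{C(G)}$, my goal reduces to establishing the recursion $F_n(k) = \big(k + 2(n-1)\big)\, F_{n-1}(k)$ together with the base case $F_1(k) = k$, which telescopes immediately to $k(k+2)\cdots(k+2n-2)$.

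To obtain the recursion, I would classify each $G \in \mathbb{G}^1_n$ by the red partners of the two vertices $O_{2n-1}$ and $O_{2n}$ joined by the last black edge. In the first case the red edge also connects $O_{2n-1}$ to $O_{2n}$; these two vertices then form an isolated two-cycle, and deleting them gives an arbitrary $G' \in \mathbb{G}^1_{n-1}$ with $C(G) = C(G') + 1$. This case is a bijection onto $\mathbb{G}^1_{n-1}$ and contributes $k\, F_{n-1}(k)$. In the second case the red edges from $O_{2n-1}$ and $O_{2n}$ go to distinct vertices $a, b \in \{O_1, \ldots, O_{2n-2}\}$; here $a, O_{2n-1}, O_{2n}, b$ appear consecutively along one cycle (through red, black, red edges), and I would \emph{splice} by deleting $O_{2n-1}, O_{2n}$ and inserting a red edge $a\text{--}b$, producing a $G' \in \mathbb{G}^1_{n-1}$ with the same number of cycles, $C(G) = C(G')$.

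The main step is to verify that this splicing map is exactly $2(n-1)$-to-one. Given a target $G' \in \mathbb{G}^1_{n-1}$, each of its $n-1$ red edges can be the edge $a\text{--}b$ that is split, and for each such edge there are two ways to reattach the labeled vertices (either $a\text{--}O_{2n-1}$ and $O_{2n}\text{--}b$, or $a\text{--}O_{2n}$ and $O_{2n-1}\text{--}b$), giving $2(n-1)$ preimages, each preserving the cycle count. Hence the second case contributes $2(n-1)\, F_{n-1}(k)$, and summing the two cases yields $F_n(k) = \big(k + 2(n-1)\big) F_{n-1}(k)$. As a consistency check, the total number of graphs produced is $\big(1 + 2(n-1)\big)(2n-3)!! = (2n-1)!!$, matching the number of red perfect matchings.

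The part requiring the most care is confirming that splicing genuinely preserves the number of connected components—i.e., that removing the segment $a\text{--}O_{2n-1}\text{--}O_{2n}\text{--}b$ from its cycle and closing that cycle with the single red edge $a\text{--}b$ neither merges two cycles nor splits one—and that $a \neq b$ so the new red edges form a genuine perfect matching on the remaining $2(n-1)$ vertices. The former is a purely local statement about the alternating cycle through $O_{2n-1}$ and $O_{2n}$ (which exists because those vertices share a black edge), while the latter is automatic since red edges form a matching. Once this local surgery is established as a cycle-count-preserving, $2(n-1)$-to-one map, the recursion and hence the closed form follow; the final expression $M_1(k,n) = (2n-1)!!(k+2n-2)!!/(k-2)!!$ is then just the earlier degeneracy prefactor $(2n-1)!!$ applied to the product, using $(k+2n-2)!!/(k-2)!! = k(k+2)\cdots(k+2n-2)$.
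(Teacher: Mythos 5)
Your proof is correct and is essentially the paper's own argument: both proceed by induction on $n$ with the same case split on the red partner of one endpoint of a distinguished black edge, yielding the recursion $F_n(k) = (k+2n-2)F_{n-1}(k)$, and your ``splice'' (delete the two distinguished vertices and join their red partners $a$--$b$) is the same component-preserving surgery as the paper's merging of $O_1$, $O_2$, $O_x$, with your $2(n-1)$-to-one preimage count matching the paper's $2n-2$ choices of $O_x$. The only cosmetic difference is that you phrase the argument in terms of alternating cycles and count preimages from the target graph, whereas the paper fixes the red partner and invokes a bijection, but these are the same computation.
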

The proof proceeds by induction over $n$, where the inductive step reduces a graph in $\mathbb{G}_{n}^{1}$ to one in $\mathbb{G}_{n-1}^{1}$ through an analysis of the red edge connected to $O_{1}$.
There are two options for this red edge: 
either it connects to $O_{2}$, the vertex with which $O_{1}$ shares a black edge, or it attaches to some $O_{x > 2}$. The former creates a connected component of size two; the latter reduces to a graph in $\mathbb{G}_{n-1}^{1}$ by merging vertices $O_{1}$, $O_{2}$, and $O_{x}$ (which does not change the number of connected components). 
Full details can be found in the SM. 

\textit{Second Moment.---}We now sketch the application of our graph-theoretic formalism to the second moment $M_2$, deferring the details to the SM. 
We expand $\abs{\haf (X^{\top}X)}^{4}$ using \cref{eqn:hafnian} which becomes a sum of products of matrix elements that are indexed by four permutations in $S_{2n}$. 
The independence of matrix elements again ensures that we must have an equal number of copies of $X_{ij}$ and $X_{ij}^{*}$ for the expectation value not to vanish on a given product. 
However, because there are more copies of $X$, there are more ways of matching the indices. 
Accounting for these possibilities leads to an expression analogous to \cref{eqn:first_moment_delta}. 
The key differences are the following: 
(1) instead of summing over a single permutation, we now sum over three permutations, labeled $\tau, \alpha, \beta$ (as in the first moment, one of the original four permutations eventually becomes redundant); 
(2) instead of summing over $n$ indices $\{o_{i}\}_{i=1}^{n}$, we now sum over $3n$ indices $\{o_{i}, q_{i}, p_{i}\}_{i=1}^{n}$; 
(3) each factor is a sum of four possible Kronecker $\delta$ terms corresponding to the different types of index matching.

As before, we will define a useful set of graphs; see \cref{fig:second_moment_red_edges}(b) for an example. We expand each index in $\{o_{i}, q_{i}, p_{i}\}_{i=1}^{n}$ to two graph vertices $\{O_{i}, Q_{i}, P_{i}\}_{i=1}^{2n}$, and we organize them into $2n$ columns and three rows assigned to $O$, $P$, and $Q$ vertices, respectively. 
We then use the Kronecker $\delta$s to define black and red edges between these vertices. 
Fixing permutations $\tau, \alpha, \beta$, there is a red edge between  $O_{\ell}$ and $O_{\ell'}$ if $\ceil{\tau(\ell)/2}=\ceil{\tau(\ell')/2}$, and similarly for the $P$ and $Q$ vertices using permutations $\alpha$ and $\beta$, respectively. This means that the red edges are constrained to lie within rows in the graph. Furthermore, these red edges again identify the vertices originating from the same index. 
Because each factor has four Kronecker $\delta$ terms, each factor contributes one of four patterns of black edges, which we refer to as type-1, type-2, type-3, and type-4. Due to the nature of these Kronecker $\delta$ terms, the black edges are constrained to lie within pairs of adjacent columns. Each graph then has one of $4^{n}$ possible sets of black edges indexed by an integer $z$. 
We therefore call these graphs $G_{\tau,\alpha,\beta}(z)$. 

As in the first moment, the number of connected components $C(G_{\tau,\alpha,\beta}(z))$ of the graph $G_{\tau,\alpha,\beta}(z)$ gives the number of free indices of its corresponding term in the expansion of the hafnian, meaning that graph contributes $k^{C(G_{\tau,\alpha,\beta}(z))}$ to the sum. The second moment then becomes
\begin{equation}\label{eqn:second_moment_intermediate}
   M_2(k,n) = \frac{(2n)!}{\paren*{2^{n}n!}^{4}}\sum_{\tau,\alpha,\beta \in S_{2n}} \sum_{z \in [4^{n}]}k^{C(G_{\tau,\alpha,\beta}(z))}.
\end{equation}
We also again use the fact that many permutations induce the same final graph. 
We thus define $\mathbb{G}^{2}_{n}(z)$ to be the set of graphs for the $z$th set of black edges and $\mathbb{G}^{2}_{n} \coloneqq \bigcup_{z\in[4^{n}]}\mathbb{G}^{2}_{n}(z)$. 
Because there are now three permutations associated to each graph, we obtain a degeneracy factor of $(2^{n}n!)^{3}$ and find  
\begin{equation}\label{eqn:second_moment_polynomial}
 M_2(k,n) = (2n-1)!!\sum_{G\in\mathbb{G}^{2}_{n}}k^{C(G)}.
\end{equation}
We can now state our second result.
\begin{theorem}\label{thm:second_moment}
The second moment $M_2(k,n)$ is a degree-$2n$ polynomial in $k$ and can be written as $ M_2(k,n) = \mbox{$(2n-1)!!$} \sum_{i = 1}^{2n} c_{i}k^{i}$, where $c_{i}$ is the number of graphs $G\in \mathbb{G}^{2}_{n}$ that have $i$ connected components. 
\end{theorem}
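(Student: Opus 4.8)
The plan is to start directly from the polynomial representation \cref{eqn:second_moment_polynomial}, $M_2(k,n) = (2n-1)!!\sum_{G\in\mathbb{G}^2_n} k^{C(G)}$, and extract its degree and coefficients by organizing the sum according to the number of connected components. Collecting together all graphs with a common value of $C(G)$ immediately yields $M_2(k,n) = (2n-1)!!\sum_i c_i k^i$ with $c_i \coloneqq |\{G \in \mathbb{G}^2_n : C(G) = i\}|$; this regrouping is purely formal, so the entire content of the theorem lies in pinning down the range of exponents $i$ that actually appear, i.e., in showing $1 \le C(G) \le 2n$ for every $G \in \mathbb{G}^2_n$ and that the upper value is attained.

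The lower bound is immediate: each $G \in \mathbb{G}^2_n$ has $6n \ge 1$ vertices and is therefore nonempty, so $C(G) \ge 1$ and no term of degree $0$ or lower survives. The main work is the upper bound $C(G) \le 2n$. I would argue it from the fixed edge structure of these graphs. The red edges form a perfect matching within each of the three rows, so the red edges alone partition the $6n$ vertices into exactly $3n$ classes. Since adding the black edges can only merge classes, it suffices to show that the black edges force at least $n$ independent mergers, dropping the count from $3n$ to at most $2n$. Concretely, I would phrase this as a rank statement: writing $C(G) = 6n - \mathrm{rank}(B)$ for the incidence matrix $B$ of $G$, the $3n$ red edges contribute rank exactly $3n$ (a matching is a forest), and the claim reduces to showing that the black edges contribute at least $n$ additional independent rows. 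The structural input that makes this work is that the black edges in each of the $n$ adjacent-column pairs always connect vertices lying in \emph{different} rows ($O$, $P$, and $Q$), regardless of which of the four types is selected; consequently each column pair supplies at least one merger across the red classes, and one checks that these $n$ mergers can be chosen independently.

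Equivalently, and more in the spirit of \cref{thm:first_moment}, I would run an induction on $n$ that strips off a single column pair at a time. The inductive step reduces a graph $G \in \mathbb{G}^2_n$ to a graph $G' \in \mathbb{G}^2_{n-1}$ by contracting the six vertices of the last column pair together with their black gadget and re-routing the incident red edges, and the key estimate is that this reduction lowers the component count by at most two, giving $C(G) \le C(G') + 2 \le 2(n-1) + 2 = 2n$ with a trivial base case. Either route makes the bound turn on the same point, which I expect to be the main obstacle: verifying \emph{uniformly} over all four black-edge types, all $4^n$ choices of $z$, and all permutations $\tau, \alpha, \beta$ that the black edges can never be absorbed so thoroughly into the red structure that more than $2n$ components survive. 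This demands a careful case analysis of the gadgets rather than a one-line bound, since a naive count only gives the weaker $C(G) \le 3n$.

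Finally, to conclude that the degree is \emph{exactly} $2n$ rather than merely at most $2n$, I would exhibit a single graph $G^\star \in \mathbb{G}^2_n$ with $C(G^\star) = 2n$, constructed so that its black and red edges align as much as possible (the analogue of the extremal first-moment graph in which the red matching coincides with the black matching). This shows $c_{2n} \ge 1$, so the leading term $k^{2n}$ is present and $M_2(k,n)$ is genuinely a degree-$2n$ polynomial, completing the proof.
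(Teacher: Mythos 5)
Your overall architecture (formal regrouping of \cref{eqn:second_moment_polynomial}, trivial lower bound, an upper bound $C(G)\le 2n$, plus an extremal graph for attainability) matches the paper's, but the structural claim on which your upper bound rests is false, and the error is fatal rather than cosmetic. You assert that the black edges in each adjacent-column pair \emph{always} connect vertices lying in different rows, regardless of which of the four types is selected. In fact, type-1 gadgets contain the same-row edge $(O_{2j-1},O_{2j})$ and type-4 gadgets contain $(Q_{2j-1},Q_{2j})$; only types 2 and 3 are entirely cross-row. Worse, if your premise were true, the theorem itself would fail: contract the red classes (the red edges form a perfect matching within each row, giving $3n$ classes), and note that each class carries exactly two black edge-endpoints, so the contracted multigraph is $2$-regular. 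If no black edge were a loop (i.e., all black edges cross-row), every component of the contracted graph would be a cycle on at least two class-vertices, forcing $C(G)\le 3n/2 < 2n$, so the $k^{2n}$ term could never appear. Your premise also contradicts your own final step: an extremal graph whose black and red edges ``align'' requires black edges lying inside a row, which is exactly what you ruled out.

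The correct argument runs in the opposite direction, exploiting the \emph{scarcity} of same-row black edges rather than their absence. The paper's proof: since the black edges form a perfect matching of all $6n$ vertices, every connected component has even size, so the smallest components have $2$ or $4$ vertices; a size-$2$ component needs a pair of vertices joined by both a black and a red edge, and since red edges lie within rows this requires a horizontal black edge, which exists only in type-1 and type-4 gadgets and at most once per column pair. Hence there are at most $n$ size-$2$ components, every other component has at least $4$ vertices, and counting vertices gives $C(G)\le s + (6n-2s)/4 \le 2n$ for $s\le n$ size-$2$ components; equality is achieved by graphs built solely from type-1 and type-4 gadgets whose horizontal edges are doubled by red edges and whose vertical edges are paired off, which is also the structure \cref{lemma:k} exploits to compute $c_{2n}=(2n)!!$ exactly. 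Your rank/contraction framework can be salvaged, but only by replacing your premise with its negation: bound the number of loops in the contracted $2$-regular multigraph by $n$ (at most one horizontal black edge per gadget), so there are at most $n$ loop components plus at most $(3n-n)/2=n$ cycle components, recovering $C(G)\le 2n$.
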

\noindent\cref{thm:second_moment} follows from \cref{eqn:second_moment_polynomial} and verifying that the limits of summation are correct, which is done in the SM.

\textit{Transition in anticoncentration.---}We now use Theorems~\ref{thm:first_moment} and \ref{thm:second_moment} in order to show that anticoncentration in GBS undergoes a transition as a function of~$k$. 
Roughly speaking, $m_2(k,n)$  upper-bounds the fractional support of the outcome distribution on outcomes with probabilities larger than uniform, i.e.\ those most relevant to the sampling task, as we explain in detail in the SM. 
We speak of strong anticoncentration if $m_2(k,n) \geq \mathrm{const.}$.
We speak of weak anticoncentration if $m_2(k,n) \geq 1/\mathrm{poly}(n)$.
If $m_2(k,n) = O(1/n^{a})$ for any constant $a > 0$, however, we say that there is a lack of anticoncentration; in this regime, only a negligible fraction of the probabilities are nontrivial. 
While our definition of anticoncentration in terms of $m_2$ is stronger than the standard definition, it captures the essence of anticoncentration, see the SM. 
We show a transition between a lack of anticoncentration for $k = 1$ and weak anticoncentration for $k \rightarrow \infty$ (which, of course, requires $m \to \infty$ as well).

In order to do so, we analyze the  polynomial coefficients $c_i$, observing that for $k=1$, $M_2(k,n) = (2n-1)!! \sum_{i=1}^{2n} c_i$, and for $k \rightarrow \infty$ \footnote{For any $n$, there exists some sufficiently large $k$ for which the leading order term dominates. The exact required scaling of $k$ with $n$ is investigated more thoroughly in the companion piece Ref.~\cite{companion}.}, $M_2(k,n) \approx (2n-1)!! \, c_{2n} k^{2n}$. The following lemma states our results for these regimes.
\begin{lemma}\label{lemma:k}
We have that 
\begin{enumerate}
    \item[i.] $M_2(1,n) = ((2n-1)!!)^4 4^n$
    \item[ii.]  $ c_{2n} = (2n)!!$
\end{enumerate}
\end{lemma}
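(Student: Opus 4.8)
The plan is to attack the two parts by separate specializations of the polynomial formula in \cref{eqn:second_moment_polynomial}, namely $M_2(k,n) = (2n-1)!! \sum_{G \in \mathbb{G}^2_n} k^{C(G)}$, which reduces the problem to a pure counting exercise over the graph family $\mathbb{G}^2_n$.

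For part (ii), the leading coefficient $c_{2n}$ counts graphs $G \in \mathbb{G}^2_n$ achieving the maximal number $2n$ of connected components. Since each graph has $6n$ vertices (the $O$, $P$, $Q$ rows, each with $2n$ vertices) and every connected component must contain at least one vertex, $2n$ components is far from the trivial bound; instead the constraint comes from the edge structure. Each graph carries exactly $3n$ black edges (one perfect matching on $n$ edges per choice… more precisely, the black edges come from the $4^n$ type-assignments) and $3n$ red edges (a perfect matching in each of the three rows, contributing $n$ edges per row). With $6n$ vertices and $6n$ edges, a graph is a disjoint union of trees-plus-one-cycle-each configurations; maximizing components forces each component to be a single cycle, so $C(G)$ is maximized precisely when every connected component is a cycle of minimal length, i.e. when the black and red edges pair up into the shortest possible closed loops. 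First I would show that the maximal component count $2n$ is attained exactly when the black-edge types and the red matchings are chosen so that each of the $2n$ columns (or an analogous minimal unit) closes into its own component, and then count the number of such configurations. I expect this count to come out to $(2n)!!$ by a direct combinatorial argument: the freedom lies in how the red matchings in each row can be aligned with the black-edge types to close minimal cycles, and this should factor into a product that telescopes to $(2n)!! = 2 \cdot 4 \cdots (2n)$.

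For part (i), setting $k=1$ collapses the polynomial to $M_2(1,n) = (2n-1)!! \sum_{G \in \mathbb{G}^2_n} 1 = (2n-1)!! \, |\mathbb{G}^2_n|$, so I only need the total cardinality of the graph family. This counts all allowed configurations: the $4^n$ choices of black-edge type assignment $z$, together with the red perfect matchings in each of the three rows. Each row admits $(2n-1)!!$ matchings, but not all triples of row-matchings are compatible with a given $z$ after accounting for the degeneracy already divided out; I would instead return to \cref{eqn:second_moment_intermediate} and set $k=1$ there, giving $M_2(1,n) = \frac{(2n)!}{(2^n n!)^4}\sum_{\tau,\alpha,\beta} \sum_{z} 1 = \frac{(2n)!}{(2^n n!)^4}((2n)!)^3 4^n$. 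Simplifying $(2n)! = (2n-1)!! \, 2^n n!$ yields $M_2(1,n) = (2n-1)!! \cdot \frac{((2n)!)^3}{(2^n n!)^3} 4^n = (2n-1)!! ((2n-1)!!)^3 4^n = ((2n-1)!!)^4 4^n$, as claimed. This route sidesteps the graph-counting entirely for part (i).

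The main obstacle is part (ii): rigorously identifying which configurations realize the maximum $2n$ components and proving the count equals exactly $(2n)!!$ rather than some nearby value. The difficulty is that the black-edge types interact nontrivially with the red matchings across the three rows, so I must verify both that no configuration exceeds $2n$ components (an upper bound via the edge/vertex cycle-counting argument above) and that the extremal configurations are in bijection with a set of size $(2n)!!$. I would handle the upper bound first using the observation that a graph with equal numbers of vertices and edges has at most as many components as independent cycles, then characterize the extremal graphs and count them, double-checking the result against the small cases $n=1,2$ computed directly from the definitions.
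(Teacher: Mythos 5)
Your part (i) is correct and is essentially the paper's own argument: setting $k=1$ in \cref{eqn:second_moment_intermediate} (equivalently, in the Kronecker-$\delta$ expansion \cref{eqn:second_moment_delta}) makes every term contribute $1$, and the count $\frac{(2n)!}{(2^n n!)^4}\,((2n)!)^3\, 4^n = ((2n-1)!!)^4 4^n$ follows from $(2n)! = (2n-1)!!\, 2^n n!$. (Your worry about ``compatibility'' of row matchings with $z$ is unfounded---$\mathbb{G}^2_n$ is exactly the product of the $4^n$ choices of $z$ with the three independent row matchings---but your fallback route is fine.)

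Part (ii), however, has a genuine gap, in two places. First, your upper-bound strategy does not reach $2n$. Since the black edges and the red edges each form a perfect matching on the $6n$ vertices, every vertex has degree exactly $2$, so each graph is a disjoint union of alternating black--red cycles (there are no tree pieces); the naive ``shortest cycle has length $2$'' bound then gives at most $3n$ components, not $2n$. Closing the gap between $3n$ and $2n$ requires the structural input you are missing: a $2$-cycle needs a black and a red edge on the \emph{same} pair of vertices, red edges are confined to rows, and among the four black-edge patterns only type-1 and type-4 contain a horizontal (within-row) black edge---exactly one per block of six vertices. Hence each block contributes at most one $2$-cycle, and its remaining four vertices at most one further component, capping $C(G)$ at $2$ per block, i.e.\ $2n$. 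Second, your guessed characterization of the extremal graphs---each column or block ``closes into its own component''---is wrong, and would undercount. In an extremal graph the vertical black edges need only be paired (by red edges in two rows) with vertical edges of the \emph{same type}, possibly from a \emph{different} block; already at $n=2$ the block-local count gives $4$ while $c_4 = 4!! = 8$. The correct count, which is the heart of the paper's proof, comes from collapsing each vertical black edge to a single vertex: the pairing structure is then precisely a graph in $\mathbb{G}^1_n$, and each of its connected components carries one free binary choice (all its vertical edges are type-1 or all type-4), so
\begin{equation}
  c_{2n} \;=\; \sum_{G\in\mathbb{G}^1_n} 2^{C(G)} \;=\; f(2,n) \;=\; (2n)!!
\end{equation}
by \cref{thm:first_moment} evaluated at $k=2$. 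Without this reduction (or an equivalent argument), your ``telescoping product'' remains a hope rather than a proof; your plan to check small cases would at least have exposed the faulty extremal characterization.
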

Part (i) of the lemma follows from a simple, direct computation using the expansion of the second moment in terms of Kronecker $\delta$s;
part (ii) follows by reducing the graph counting problem to a special instance of the first moment with $k = 2$. 
This reduction happens because the types of edges that are allowed in order to get $2n$ connected components are quite restrictive, see the SM for details. 

\cref{thm:second_moment} and \cref{lemma:k} imply that, when $k = n^{0} =1$, the inverse normalized second moment is negligible: 
\begin{equation}
    m_2(1,n) = \frac{((2n-1)!!^{2})^{2}}{(2n-1)!!^{4}4^{n}}=4^{-n}. 
\end{equation}
Now take $k = n^{a} \leq m$, for some large $a$. 
In this limit, $M_2(k,n)$ is dominated by the behavior of its leading order in $k$, which is $(2n-1)!! (2n)!! k^{2n}$.
Additionally, $M_1(k,n) = (2n-1)!! (k + 2n-2)!!/(k-2)!! \sim (2n-1)!! k^{n}$ and, hence, the $k$-dependence of $m_2(k,n)$ vanishes. 
Using Stirling's approximation on the remaining $n$-dependence yields
\begin{equation}
   m_2(k,n) \sim \frac{(2n - 1)!!}{(2n)!}  = \frac{(2n)!}{4^{n}(n!)^{2}} \sim 
    \frac{1}{\sqrt{\pi n}}.
\end{equation}
This proves the central claim of our work. In the SM, we also show how anticoncentration of the approximate GBS distribution relates to anticoncentration of the true distribution.

\textit{Discussion and conclusion---}In this Letter, we have shown a transition in anticoncentration in the output probabilities of GBS as a function of the number of initially squeezed modes. 
The presence of anticoncentration is additional evidence for the hardness of GBS, and our results thus yield clear advice for experiments in the collision-free regime: given a desired average photon number, distribute the required squeezing for this number across all modes. 

Our results give rise to an interesting state of affairs when considered in conjunction with the hiding property: 
in both GBS and standard Boson Sampling, the hiding property is known to fail outside of the highly dilute collision-free regime which is characterized by $m = O(n^2)$ \cite{jiang_entries_2009,jiang_how_2006}, while it is conjectured to hold for any $m = \omega(n^2)$ \cite{aaronsonComputationalComplexityLinear2013,deshpandeQuantumComputationalAdvantage2022a}. 
Standard Boson Sampling anticoncentrates weakly with inverse normalized second moment $1/n$ in the same regime \cite[][Lemma 8.8]{aaronsonComputationalComplexityLinear2013}. 
The only relevant scale is thus the relative size of the number of modes to the number of photons. 
In GBS, we now find an additional relevant scale, the number of squeezed modes in the input state. 
This scale does not seem to be relevant to the hiding property in GBS which holds for $m = \omega(n^2) $ and \emph{any} $k$ under Conjecture~\ref{con:gaussian}.
We find, however, that it is very relevant to the anticoncentration property of GBS. 

For a potential explanation of the relevance of this scale, we refer to Scattershot Boson Sampling, which is ``intermediate'' between standard Boson Sampling and GBS.
In Scattershot Boson Sampling, $n$ single photons are distributed randomly across the input modes using postselection on two-mode squeezed states. 
In order to satisfy collision-freeness in the input state with high probability, the total squeezing in the input needs to be distributed across $\omega(n^2)$ initial squeezed states \cite{lundBosonSamplingGaussian2014}, see the SM for details. 
It is not clear to what extent this explanation generalizes to GBS, however, because the distribution of photons in the input state of GBS is only supported on (collision-full) integer multiples of photon pairs in every mode. Therefore, this connection warrants future study.

Our results also connect to the classical simulability of GBS. 
The hafnian of $A$ can be computed in time exponential in the rank of $A$ \cite{bjorklund_faster_2019-1}. 
The absence of anticoncentration for small $k \ll n$ overlaps with this regime of efficient classical simulability, as the rank of $X^\top X$ is upper-bounded by~$k$. 

But does it also extend beyond this regime? 
While we have been able to prove the existence of this transition, our above work is not sufficient to pin down its precise location. 
However, we conjecture that, weak anticoncentration holds for $k = \omega(n^{2})$, but there is a lack of anticoncentration for $k = O(n^{2})$.
In a companion work \cite{companion}, we give evidence for this conjecture by fully analyzing the coefficients $c_i$ of $M_2(k,n)$. 
\vspace{1em}

\let\oldaddcontentsline\addcontentsline
\renewcommand{\addcontentsline}[3]{}
\acknowledgments
\let\addcontentsline\oldaddcontentsline

We are grateful to Changhun Oh for sharing his independent calculation of the first moment.
We also thank Changhun Oh, Bill Fefferman, Marcel Hinsche, Max Alekseyev, and Benjamin Banavige for helpful discussions, and Marcel Hinsche for comments on an earlier version of the manuscript.   
A.E., J.T.I., and A.V.G.~were supported in part by the DoE ASCR Accelerated Research in Quantum Computing program (award No.~DE-SC0020312), DARPA SAVaNT ADVENT, AFOSR MURI, DoE ASCR Quantum Testbed Pathfinder program (awards No.~DE-SC0019040 and No.~DE-SC0024220), NSF QLCI (award No.~OMA-2120757), NSF STAQ program, and AFOSR.  
Support is also acknowledged from the U.S.~Department of Energy, Office of Science, National Quantum Information Science Research Centers, Quantum Systems Accelerator.
J.T.I thanks the Joint Quantum Institute at the University of Maryland for support through a JQI fellowship.
D.H. acknowledges funding from the US Department of Defense through a QuICS Hartree fellowship.

\let\oldaddcontentsline\addcontentsline
\renewcommand{\addcontentsline}[3]{}

\bibliography{doms_bib,LXEB,suppmat}

\let\addcontentsline\oldaddcontentsline

\clearpage

\onecolumngrid
\cleardoublepage
\setcounter{page}{1}

\setcounter{section}{0}
\renewcommand{\thesection}{S\arabic{section}}
\setcounter{figure}{0}
\renewcommand{\thefigure}{S\arabic{figure}}
\setcounter{theorem}{0}
\renewcommand{\thetheorem}{S\arabic{theorem}}
\setcounter{lemma}{0}
\renewcommand{\thelemma}{S\arabic{lemma}}
\setcounter{equation}{0}
\renewcommand{\theequation}{S\arabic{equation}}
\pagenumbering{roman}

\thispagestyle{empty}
\begin{center}
\textbf{\large Supplemental Material for ``Transition of Anticoncentration in Gaussian Boson Sampling''}\\
\vspace{2ex}
Adam Ehrenberg, Joseph T. Iosue, Abhinav Deshpande, Dominik Hangleiter, Alexey V. Gorshkov
\vspace{2ex}
\end{center}

In this Supplemental Material, we provide details behind many of the expressions in the main text. In particular, we derive expressions for the first and second moments of the output probabilities in the form of Kronecker $\delta$s and derive \cref{eqn:first_moment_delta,eqn:second_moment_intermediate,eqn:second_moment_polynomial} of the main text. In addition, we prove \cref{thm:first_moment}, \cref{thm:second_moment}, and \cref{lemma:k} that are presented as building blocks toward the proof of a transition in anticoncentration. We also explain more thoroughly the connection between the hiding property and the first moment of hafnians of generalized COE matrices. We further contextualize our definition of anticoncentration with respect to the literature and show how anticoncentration of the approximate distribution of output probabilities connects to anticoncentration of the true distribution. Finally, we use Scattershot Boson Sampling as intuition for why the transition in anticoncentration in Gaussian Boson Sampling exists.
\tableofcontents

\section{Algebraic Details of the First Moment---Derivation of Eq.~(5)}

In this section, we derive \cref{eqn:first_moment_delta} of the main text, which gives an expression for the first moment of the output probabilities in terms of Kronecker $\delta$s. We use \cref{eqn:hafnian} to expand the hafnian. We then use properties of independent Gaussians to simplify the expression. Specifically,
\begin{align}
    \underset{X\sim \mathcal{G}^{k \times 2n}}{\E}\bkt*{\abs{\haf(X^{\top}X)}^{2}} &= \paren*{\frac{1}{2^{n}n!}}^{2} \sum_{\sigma, \tau \in S_{2n}}\underset{X\sim \mathcal{G}^{k \times 2n}}{\E}\bkt*{\prod_{j=1}^{n}\paren*{\sum_{\ell_{j}=1}^{k}X_{\ell_{j}\sigma(2j-1)}X_{\ell_{j}\sigma(2j)}}\paren*{\sum_{o_{j}=1}^{k}X^{*}_{o_{j}\tau(2j-1)}X^{*}_{o_{j}\tau(2j)}}}\\
    &= \paren*{\frac{1}{2^{n}n!}}^{2} \sum_{\sigma, \tau \in S_{2n}}\underset{X\sim \mathcal{G}^{k \times 2n}}{\E}\bkt*{\sum_{\{\ell_{i},o_{i}\}_{i=1}^{n}=1 }^{k}\paren*{\prod_{j=1}^{n}X_{\ell_{j}\sigma(2j-1)}X_{\ell_{j}\sigma(2j)}X^{*}_{o_{j}\tau(2j-1)}X^{*}_{o_{j}\tau(2j)}}}\\
    &= \paren*{\frac{1}{2^{n}n!}}^{2} \sum_{\sigma, \tau \in S_{2n}}\sum_{\{\ell_{i},o_{i}\}_{i=1}^{n}=1}^{k}\underset{X\sim \mathcal{G}^{k \times 2n}}{\E}\bkt*{\paren*{\prod_{j=1}^{n}X_{\ell_{j}\sigma(2j-1)}X_{\ell_{j}\sigma(2j)}X^{*}_{o_{j}\tau(2j-1)}X^{*}_{o_{j}\tau(2j)}}}\\
    &= \paren*{\frac{1}{2^{n}n!}}^{2} \sum_{\sigma, \tau \in S_{2n}}\sum_{\{\ell_{i},o_{i}\}_{i=1}^{n}=1}^{k}\paren*{\prod_{j=1}^{n}\delta_{\ell_{j}o_{j'}}\delta_{\ell_{j}o_{j''}}},
\end{align}
where we have defined $j'$ to be the index such that $\sigma(2j-1) = \tau(2j'-1)$ or $\tau(2j')$. Similarly, $j''$ is the index such that $\sigma(2j) = \tau(2j''-1)$ or $\tau(2j'')$. Observe that $j' = j''$ if $\{\sigma(2j-1)$, $\sigma(2j)\} = \{\tau(2j-1)$, $\tau(2j)\}$ (note that this is an equality of \emph{sets}, meaning order does not matter). The first equation uses the definition of the hafnian, while the second follows from exchanging product and sum. The penultimate equation comes from the linearity of expectation. To get to the final equation, first recall that the $X_{ij}$ are i.i.d.~complex Gaussian random variables with mean $0$ and variance $1$. This means that the expectation value of a product of entries vanishes unless there are an equal number of unconjugated and conjugated copies of all indices. By the definition of $j'$, we ensure that the entry $X_{\ell_{j}\sigma(2j-1)}$ is matched to one of the $X^{*}$ entries as long as $\ell_{j}$ and $o_{j'}$ match, hence the first Kronecker $\delta$. The second Kronecker $\delta$ follows similarly.

We can exactly calculate $j'$:
\begin{equation}
    \sigma(2j-1) \in \{\tau(2j'-1), \tau(2j')\}
     \iff
     \tau^{-1}(\sigma(2j-1)) \in \{ 2j'-1, 2j'\}
     \iff
     \frac{\tau^{-1}(\sigma(2j-1))}{2} \in \{j' - \frac{1}{2}, j'\}.
\end{equation}
Thus:
\begin{equation}\label{eqn:j'_from_j}
    j' = \ceil*{\frac{\tau^{-1}(\sigma(2j-1))}{2}}.
\end{equation}
\noindent Similarly:
\begin{equation}\label{eqn:j''_from_j}
     j'' = \ceil*{\frac{\tau^{-1}(\sigma(2j))}{2}}.
\end{equation}
Therefore,
\begin{align}
     \underset{X\sim \mathcal{G}^{k \times 2n}}{\E}\bkt*{\abs{\haf(X^{\top}X)}^{2}} &= \paren*{\frac{1}{2^{n}n!}}^{2} \sum_{\sigma, \tau \in S_{2n}}\sum_{\{\ell_{i},o_{i}\}_{i=1}^{n}=1 }^{k}\paren*{\prod_{j=1}^{n}\delta_{\ell_{j}o_{\ceil*{\frac{\tau^{-1}(\sigma(2j-1))}{2}}}}\delta_{\ell_{j}o_{\ceil*{\frac{\tau^{-1}(\sigma(2j))}{2}}}}}\\
     &=\frac{(2n)!}{(2^{n}n!)^{2}}\sum_{\tau \in S_{2n}}\bkt*{\sum_{\{\ell_{i},o_{i}\}_{i=1}^{n}=1 }^{k}\paren*{\prod_{j=1}^{n}\delta_{\ell_{j}o_{\ceil*{\frac{\tau^{-1}(2j-1)}{2}}}}\delta_{\ell_{j}o_{\ceil*{\frac{\tau^{-1}(2j)}{2}}}}}}\\
     & = \frac{(2n)!}{(2^{n}n!)^{2}}\sum_{\tau \in S_{2n}}\bkt*{\sum_{\{o_{i}\}_{i=1}^{n}=1 }^{k}\paren*{\prod_{j=1}^{n}\delta_{o_{\ceil*{\frac{\tau(2j-1)}{2}}},o_{\ceil*{\frac{\tau(2j)}{2}}}}}}.\label{eqn:first_moment_delta_SM}
\end{align}
In the first equality, we have used \cref{eqn:j'_from_j,eqn:j''_from_j}.  In the second, we notice that $\tau$ and $\sigma$ occur only together as $\tau^{-1}\circ\sigma$, meaning we can perform a change of variables to convert our double summation over permutations in $S_{2n}$ to a single summation over a redefined $\tau^{-1}$ while gaining a factor $(2n)!$. The third equality comes from summing over the $\ell_{j}$ indices and redefining $\tau^{-1}\to\tau$. This is \cref{eqn:first_moment_delta} in the main text. 

\section{Algebraic Details of the Second Moment---Derivation of Eqs.~(8) and (9)}
In this section, we generalize the calculation of the previous section to the second moment of the output probabilities. The structure of the derivation is very similar, but the details are more nuanced due to the increased number of copies of $X$. 

We again begin with some algebraic manipulations:
\begin{align}
    \begin{split}
     &\underset{X\sim \mathcal{G}^{k \times 2n}}{\E}\bkt*{\abs{\haf(X^{\top}X)}^{4}} = \paren*{\frac{1}{2^{n}n!}}^{4} \sum_{\sigma,\tau,\alpha,\beta \in S_{2n}}\underset{X\sim \mathcal{G}^{k \times 2n}}{\E}\Biggr[\prod_{j=1}^{n}\paren*{\sum_{\ell_{j}=1}^{k}X_{\ell_{j}\sigma(2j-1)}X_{\ell_{j}\sigma(2j)}}\paren*{\sum_{o_{j}=1}^{k}X^{*}_{o_{j}\tau(2j-1)}X^{*}_{o_{j}\tau(2j)}}\\
    &\phantom{aaaaaaaaaa} \times \paren*{\sum_{p_{j}=1}^{k}X_{p_{j}\alpha(2j-1)}X_{p_{j}\alpha(2j)}}\paren*{\sum_{q_{j}=1}^{k}X^{*}_{q_{j}\beta(2j-1)}X^{*}_{q_{j}\beta(2j)}} \Biggr] 
    \end{split} \\
    \begin{split}
    &= \frac{1}{(2^{n}n!)^{4}} \sum_{\sigma,\tau,\alpha,\beta \in S_{2n}}\sum_{\{\ell_{i},o_{i},p_{i},q_{i}\}_{i=1}^{n}=1}^{k} \\
    &\phantom{aaaaaaaaaa}\underset{X\sim \mathcal{G}^{k \times 2n}}{\E}\Biggr[\prod_{j=1}^{n}X_{\ell_{j}\sigma(2j-1)}X_{\ell_{j}\sigma(2j)}X^{*}_{o_{j}\tau(2j-1)}X^{*}_{o_{j}\tau(2j)} X_{p_{j}\alpha(2j-1)}X_{p_{j}\alpha(2j)}X^{*}_{q_{j}\beta(2j-1)}X^{*}_{q_{j}\beta(2j)}\Biggr].\label{eqn:large_expectation}
    \end{split}
\end{align}
This first equation simply comes from the definition of the hafnian, and the second from exchanging product and sum and using the linearity of expectation. As in the proof of the first moment, we must properly match the indices of the Gaussian elements. Recall that, in order for the expectation value not to vanish, the indices $i,j$ must show up an equal number of times in a conjugated and non-conjugated copy of $X$ (otherwise, the expectation value of that term will vanish because our Gaussian is complex with zero mean). To proceed, first recall that permutations are bijective. Therefore, for all $j$ and any given permutation $\eta$, there is a unique value $y_{j}$ such that $\sigma(2j-1) = \eta(2y_{j}-1)$ or $\sigma(2j-1) = \eta(2y_{j})$. Similarly, there is a unique value $y_{j}'$ such that $\sigma(2j) = \eta(2y_{j}'-1)$ or $\sigma(2j) = \eta(2y_{j}')$. Using this bijectivity and the independence of matrix elements allows us to separate the single expectation value on the $8n$ matrix elements in \cref{eqn:large_expectation} into a product of $2n$ expectation values of $4$ elements:
\begin{equation}
\prod_{j=1}^{n}\underset{X\sim \mathcal{G}^{k \times 2n}}{\E}\bkt*{X_{\ell_{j}\sigma(2j-1)}X_{p_{k_{j}}\sigma(2j-1)} X^{*}_{o_{i_{j}}\sigma(2j-1)}X^{*}_{q_{m_{j}}\sigma(2j-1)}}\underset{X\sim \mathcal{G}^{k \times 2n}}{\E}\bkt*{X_{\ell_{j}\sigma(2j)}X_{p_{k_{j}'}\sigma(2j)} X^{*}_{o_{i_{j}'}\sigma(2j)}X^{*}_{q_{m_{j}'}\sigma(2j)}}.
\end{equation}
To explain more thoroughly: we have defined $i_{j}, k_{j}, m_{j}$ to be the indices that map to $\sigma(2j-1)$ under $\tau, \alpha, \beta$, respectively, in the sense that either $\eta(2y_{j}-1) = \sigma(2j-1)$ or $\eta(2y_{j}) = \sigma(2j-1)$ for $\eta \in \{\tau,\alpha,\beta\}$ and $y\in\{i, k, m\}$, respectively. Because two matrix elements are necessarily independent if they do not match on the second index, we can separate all elements with $\sigma(2j-1)$ as the second element into a single expectation value, hence the first term. To get the second term, we repeat this argument where $i_{j}', k'_{j}, m_{j}'$ are the indices that map to $\sigma(2j)$ under $\tau, \alpha, \beta$, respectively, in the sense that either $\eta(2y_{j}-1)=\sigma(2j)$ or $\eta(2y_{j}) = \sigma(2j)$ for $\eta \in \{\tau,\alpha,\beta\}$ and $y\in\{i, k, m\}$, respectively.

Now consider the first expectation value. For a nonvanishing expectation value, we must appropriately match the first indices of the matrix elements. We have three options: either all four indices can match, or the indices can be paired off in one of two ways. In the former case, the expectation value yields $2$ given that the elements are complex Gaussian with mean $0$ and variance $1$. By the same logic, the latter two cases yield an expectation value of $1$. In summary, 
\begin{align}
    \ell_{j} = p_{k_{j}} = o_{i_{j}} = q_{m_{j}} &\implies \mathbb{E} \to 2, \\
    (\ell_{j} \neq p_{k_{j}}) \wedge (\ell_{j} = o_{i_{j}}) \wedge (p_{k_{j}} = q_{m_{j}}) &\implies \mathbb{E} \to  1, \\
    (\ell_{j} \neq p_{k_{j}}) \wedge (\ell_{j} = q_{m_{j}}) \wedge (p_{k_{j}} = o_{i_{j}}) &\implies \mathbb{E} \to  1.
\end{align}
One might naively think that there should be another contribution from matching indices as
\begin{equation}
    (\ell_{j} = p_{k_{j}}) \wedge (\ell_{j} \neq q_{m_{j}}) \wedge (q_{m_{j}} = o_{i_{j}}). 
\end{equation}
However, the expectation value in this case actually vanishes, as we are working with \emph{complex} Gaussian random variables, meaning the indices need to be matched such that there are an equal number of conjugated and non-conjugated indices. 

We can write this in one simple expression using Kronecker $\delta$s as
\begin{equation}
    2\delta_{\ell_{j}p_{k_{j}}o_{i_{j}}q_{m_{j}}} + \delta_{\ell_{j}o_{i_{j}}}\delta_{p_{k_{j}}q_{m_{j}}}(1-\delta_{\ell_{j}p_{k_{j}}}) + \delta_{\ell_{j}q_{m_{j}}}\delta_{p_{k_{j}}o_{i_{j}}}(1-\delta_{\ell_{j}p_{k_{j}}}) =  \delta_{\ell_{j}o_{i_{j}}}\delta_{p_{k_{j}}q_{m_{j}}}+\delta_{\ell_{j}q_{m_{j}}}\delta_{p_{k_{j}}o_{i_{j}}}.
\end{equation}
That is,
\begin{equation}
     \underset{X\sim \mathcal{G}^{k \times 2n}}{\E}\bkt*{X_{\ell_{j}\sigma(2j-1)}X_{p_{k_{j}}\sigma(2j-1)} X^{*}_{o_{i_{j}}\sigma(2j-1)}X^{*}_{q_{m_{j}}\sigma(2j-1)}} =  \delta_{\ell_{j}o_{i_{j}}}\delta_{p_{k_{j}}q_{m_{j}}}+\delta_{\ell_{j}q_{m_{j}}}\delta_{p_{k_{j}}o_{i_{j}}},
\end{equation}
which is essentially an application of Isserlis'/Wick's theorem. 
Equivalent calculations as those used to derive \cref{eqn:j'_from_j,eqn:j''_from_j} can be made to rewrite each of $o_{i_{j}}, p_{k_{j}}, q_{m_{j}}$ in terms of $j$, giving
\begin{multline}
    \delta_{\ell_{j}o_{i_{j}}}\delta_{p_{k_{j}}q_{m_{j}}}+\delta_{\ell_{j}q_{m_{j}}}\delta_{p_{k_{j}}o_{i_{j}}} = \\
    \delta_{\ell_{j}o_{\ceil*{\frac{\tau^{-1}(\sigma(2j-1))}{2}}}}\delta_{p_{\ceil*{\frac{\alpha^{-1}(\sigma(2j-1))}{2}}}q_{\ceil*{\frac{\beta^{-1}(\sigma(2j-1))}{2}}}}+\delta_{\ell_{j}q_{\ceil*{\frac{\beta^{-1}(\sigma(2j-1))}{2}}}}\delta_{p_{\ceil*{\frac{\alpha^{-1}(\sigma(2j-1))}{2}}}o_{\ceil*{\frac{\tau^{-1}(\sigma(2j-1))}{2}}}}.
\end{multline}
Thus
\begin{multline}
    \underset{X\sim \mathcal{G}^{k \times 2n}}{\E}\bkt*{X_{\ell_{j}\sigma(2j-1)}X_{p_{k_{j}}\sigma(2j-1)} X^{*}_{o_{i_{j}}\sigma(2j-1)}X^{*}_{q_{m_{j}}\sigma(2j-1)}}\underset{X\sim \mathcal{G}^{k \times 2n}}{\E}\bkt*{X_{\ell_{j}\sigma(2j)}X_{p_{k'_{j}}\sigma(2j)} X^{*}_{o_{i'_{j}}\sigma(2j)}X^{*}_{q_{m'_{j}}\sigma(2j)}} = \\
    \paren*{\delta_{\ell_{j}o_{\ceil*{\frac{\tau^{-1}(\sigma(2j-1))}{2}}}}\delta_{p_{\ceil*{\frac{\alpha^{-1}(\sigma(2j-1))}{2}}}q_{\ceil*{\frac{\beta^{-1}(\sigma(2j-1))}{2}}}}+\delta_{\ell_{j}q_{\ceil*{\frac{\beta^{-1}(\sigma(2j-1))}{2}}}}\delta_{p_{\ceil*{\frac{\alpha^{-1}(\sigma(2j-1))}{2}}}o_{\ceil*{\frac{\tau^{-1}(\sigma(2j-1))}{2}}}}}  \\
    \times \paren*{\delta_{\ell_{j}o_{\ceil*{\frac{\tau^{-1}(\sigma(2j))}{2}}}}\delta_{p_{\ceil*{\frac{\alpha^{-1}(\sigma(2j))}{2}}}q_{\ceil*{\frac{\beta^{-1}(\sigma(2j))}{2}}}}+\delta_{\ell_{j}q_{\ceil*{\frac{\beta^{-1}(\sigma(2j))}{2}}}}\delta_{p_{\ceil*{\frac{\alpha^{-1}(\sigma(2j))}{2}}}o_{\ceil*{\frac{\tau^{-1}(\sigma(2j))}{2}}}}}.
\end{multline}
Therefore,
\begin{multline}
    \underset{X\sim \mathcal{G}^{k \times 2n}}{\E}\bkt*{\abs{\haf(X^{\top}X)}^{4}}= \paren*{\frac{1}{2^{n}n!}}^{4} \sum_{\sigma,\tau,\alpha,\beta \in S_{2n}}\sum_{\{\ell_{i},o_{i},p_{i},q_{i}\}_{i=1}^{n}=1}^{k}\Biggr[\prod_{j=1}^{n} \\
    \paren*{\delta_{\ell_{j}o_{\ceil*{\frac{\tau^{-1}(\sigma(2j-1))}{2}}}}\delta_{p_{\ceil*{\frac{\alpha^{-1}(\sigma(2j-1))}{2}}}q_{\ceil*{\frac{\beta^{-1}(\sigma(2j-1))}{2}}}}+\delta_{\ell_{j}q_{\ceil*{\frac{\beta^{-1}(\sigma(2j-1))}{2}}}}\delta_{p_{\ceil*{\frac{\alpha^{-1}(\sigma(2j-1))}{2}}}o_{\ceil*{\frac{\tau^{-1}(\sigma(2j-1))}{2}}}}}  \\
    \times \paren*{\delta_{\ell_{j}o_{\ceil*{\frac{\tau^{-1}(\sigma(2j))}{2}}}}\delta_{p_{\ceil*{\frac{\alpha^{-1}(\sigma(2j))}{2}}}q_{\ceil*{\frac{\beta^{-1}(\sigma(2j))}{2}}}}+\delta_{\ell_{j}q_{\ceil*{\frac{\beta^{-1}(\sigma(2j))}{2}}}}\delta_{p_{\ceil*{\frac{\alpha^{-1}(\sigma(2j))}{2}}}o_{\ceil*{\frac{\tau^{-1}(\sigma(2j))}{2}}}}}
    \Biggr].
\end{multline}
We can again reparameterize our sums over the permutations by performing a change of variables $ (\eta^{-1}\circ\sigma) \to \eta$ for $\eta \in \{\tau, \alpha, \beta\}$. This yields 
\begin{multline}
     \underset{X\sim \mathcal{G}^{k \times 2n}}{\E}\bkt*{\abs{\haf(X^{\top}X)}^{4}} = \paren*{\frac{1}{2^{n}n!}}^{4} (2n)! \sum_{\tau,\alpha,\beta \in S_{2n}}\sum_{\{\ell_{i},o_{i},p_{i}, q_i\}_{i=1}^{n}=1}^{k}\Biggr[\prod_{j=1}^{n} \\
    \paren*{\delta_{\ell_{j}o_{\ceil*{\frac{\tau(2j-1)}{2}}}}\delta_{p_{\ceil*{\frac{\alpha(2j-1)}{2}}}q_{\ceil*{\frac{\beta(2j-1)}{2}}}}+\delta_{\ell_{j}q_{\ceil*{\frac{\beta(2j-1)}{2}}}}\delta_{p_{\ceil*{\frac{\alpha(2j-1)}{2}}}o_{\ceil*{\frac{\tau(2j-1)}{2}}}}} \\ 
    \times \paren*{\delta_{\ell_{j}o_{\ceil*{\frac{\tau(2j)}{2}}}}\delta_{p_{\ceil*{\frac{\alpha(2j)}{2}}}q_{\ceil*{\frac{\beta(2j)}{2}}}}+\delta_{\ell_{j}q_{\ceil*{\frac{\beta(2j)}{2}}}}\delta_{p_{\ceil*{\frac{\alpha(2j)}{2}}}o_{\ceil*{\frac{\tau(2j)}{2}}}}}
    \Biggr].
\end{multline}
Expanding the product and summing over $\ell_{j}$ yields
\begin{multline}\label{eqn:second_moment_delta}
    \underset{X\sim \mathcal{G}^{k \times 2n}}{\E}\bkt*{\abs{\haf(X^{\top}X)}^{4}} = \paren*{\frac{1}{2^{n}n!}}^{4} (2n)! \sum_{\tau,\alpha,\beta \in S_{2n}}\sum_{\{o_{i},p_{i},q_{i}\}_{i=1}^{n}=1}^{k}\Biggr[\prod_{j=1}^{n} \\
    \Biggr(\delta_{o_{\ceil*{\frac{\tau(2j-1)}{2}}}o_{\ceil*{\frac{\tau(2j)}{2}}}}\delta_{p_{\ceil*{\frac{\alpha(2j-1)}{2}}}q_{\ceil*{\frac{\beta(2j-1)}{2}}}}\delta_{p_{\ceil*{\frac{\alpha(2j)}{2}}}q_{\ceil*{\frac{\beta(2j)}{2}}}} + \delta_{o_{\ceil*{\frac{\tau(2j-1)}{2}}}q_{\ceil*{\frac{\beta(2j)}{2}}}}\delta_{p_{\ceil*{\frac{\alpha(2j-1)}{2}}}q_{\ceil*{\frac{\beta(2j-1)}{2}}}}\delta_{p_{\ceil*{\frac{\alpha(2j)}{2}}}o_{\ceil*{\frac{\tau(2j)}{2}}}}
    +\\
   \delta_{q_{\ceil*{\frac{\beta(2j-1)}{2}}}o_{\ceil*{\frac{\tau(2j)}{2}}}}\delta_{p_{\ceil*{\frac{\alpha(2j-1)}{2}}}o_{\ceil*{\frac{\tau(2j-1)}{2}}}}\delta_{p_{\ceil*{\frac{\alpha(2j)}{2}}}q_{\ceil*{\frac{\beta(2j)}{2}}}} + \delta_{q_{\ceil*{\frac{\beta(2j-1)}{2}}}q_{\ceil*{\frac{\beta(2j)}{2}}}}\delta_{p_{\ceil*{\frac{\alpha(2j-1)}{2}}}o_{\ceil*{\frac{\tau(2j-1)}{2}}}}\delta_{p_{\ceil*{\frac{\alpha(2j)}{2}}}o_{\ceil*{\frac{\tau(2j)}{2}}}}\Biggr)
    \Biggr].
\end{multline}
This equation is the starting point of a new graph-theoretic approach.

As discussed in the main text, we use \cref{eqn:second_moment_delta} to define graphs, examples of which are provided in \cref{fig:second_moment_red_edges}(b) and \cref{fig:limits_examples}(a). Specifically, we let $G_{\tau,\alpha,\beta}(z)$ be a graph on $6n$ vertices, with labels $\{O_{i},P_{i},Q_{i}\}_{i=1}^{2n}$, and $z$ an integer from 1 to $4^{n}$. As was the case for the proof of the first moment, we use the Kronecker $\delta$s to define black and red edges. $z$ enumerates the different patterns of black edges, and $\tau, \alpha, \beta$ determine the red edges. Specifically, there is a red edge between $O_{j}$ and $O_{j'}$ if $\ceil{\tau(j)/2}=\ceil{\tau(j')/2}$, and similarly for the $O$ and $Q$ vertices using permutations $\alpha$ and $\beta$, respectively. However, given a choice of permutations, there are $4^{n}$ possible sets of black edges that correspond to the $4^{n}$ possible combinations of terms in \cref{eqn:second_moment_delta}. The sets of edges corresponding to each term are listed below: 
\begin{align}
   \delta_{o_{\ceil*{\frac{\tau(2j-1)}{2}}}o_{\ceil*{\frac{\tau(2j)}{2}}}}\delta_{p_{\ceil*{\frac{\alpha(2j-1)}{2}}}q_{\ceil*{\frac{\beta(2j-1)}{2}}}}\delta_{p_{\ceil*{\frac{\alpha(2j)}{2}}}q_{\ceil*{\frac{\beta(2j)}{2}}}} & \to \{(O_{2j-1}, O_{2j}), (P_{2j-1},Q_{2j-1}),(P_{2j},Q_{2j})\}, \\
    \delta_{o_{\ceil*{\frac{\tau(2j-1)}{2}}}q_{\ceil*{\frac{\beta(2j)}{2}}}}\delta_{p_{\ceil*{\frac{\alpha(2j-1)}{2}}}q_{\ceil*{\frac{\beta(2j-1)}{2}}}}\delta_{p_{\ceil*{\frac{\alpha(2j)}{2}}}o_{\ceil*{\frac{\tau(2j)}{2}}}} & \to \{(O_{2j-1}, Q_{2j}), (P_{2j-1},Q_{2j-1}),(O_{2j},P_{2j})\}, \\
     \delta_{q_{\ceil*{\frac{\beta(2j-1)}{2}}}o_{\ceil*{\frac{\tau(2j)}{2}}}}\delta_{p_{\ceil*{\frac{\alpha(2j-1)}{2}}}o_{\ceil*{\frac{\tau(2j-1)}{2}}}}\delta_{p_{\ceil*{\frac{\alpha(2j)}{2}}}q_{\ceil*{\frac{\beta(2j)}{2}}}} &\to  \{(O_{2j}, Q_{2j-1}), (P_{2j-1},O_{2j-1}),(P_{2j},Q_{2j})\}, \\
     \delta_{q_{\ceil*{\frac{\beta(2j-1)}{2}}}q_{\ceil*{\frac{\beta(2j)}{2}}}}\delta_{p_{\ceil*{\frac{\alpha(2j-1)}{2}}}o_{\ceil*{\frac{\tau(2j-1)}{2}}}}\delta_{p_{\ceil*{\frac{\alpha(2j)}{2}}}o_{\ceil*{\frac{\tau(2j)}{2}}}} &\to \{(O_{2j-1}, P_{2j-1}), (O_{2j},P_{2j}),(Q_{2j-1},Q_{2j})\}.
\end{align}
We refer to these sets of black edges as type-1, type-2, type-3, and type-4, respectively. We take the convention that our graphs have the vertices organized into three rows and $2n$ columns. The first, second, and third rows correspond to type-$O$, $-P$, and $-Q$ vertices, respectively. The columns are ordered by index $i$. Using this convention, black edges are constrained to lie within groups of two columns $2i-1$ and $2i$ using one of the four patterns described above. Again, see \cref{fig:second_moment_red_edges}(b) in the main text and \cref{fig:limits_examples}(a) later in this Supplemental Material for examples (please note that \cref{fig:limits_examples}(a) is not fully general, as it only has type-1 and type-4 black edges, but it does show that patterns of black edges can repeat, and it shows how $z$ identifies the patterns of black edges present in the graph). 

We repeat the conclusion of the main text, which is that we can map the number of ``free indices'' in the Kronecker $\delta$s to the number of connected components $C(G_{\tau,\alpha,\beta}(z))$ of the graph $G_{\tau,\alpha,\beta}(z)$. Each graph contributes $k^{C(G_{\tau,\alpha,\beta}(z))}$ to the sum, which means that the second moment can be written as
\begin{equation}\label{eqn:second_moment_intermediate_supp}
   M_2(k,n) = \frac{(2n)!}{\paren*{2^{n}n!}^{4}}\sum_{\tau,\alpha,\beta \in S_{2n}} \sum_{z \in [4^{n}]}k^{C(G_{\tau,\alpha,\beta}(z))},
\end{equation}
which is \cref{eqn:second_moment_intermediate} of the main text.
Removing the degeneracies induced by different permutations, and defining $\mathbb{G}^{2}_{n}(z)$ to be the set of graphs for the $z$th set of black edges and $\mathbb{G}^{2}_{n} \coloneqq \bigcup_{z=1}^{4^{n}}\mathbb{G}^{2}_{n}(z)$, we get a final result of  
\begin{equation}\label{eqn:second_moment_polynomial_supp}
 M_2(k,n) = (2n-1)!!\sum_{G\in\mathbb{G}^{2}_{n}}k^{C(G)}.
\end{equation}
This is \cref{eqn:second_moment_polynomial} of the main text.

\section{Proofs of Theorem~1, Theorem~2, and Lemma~1}
In this section, we give the proofs of \cref{thm:first_moment}, \cref{thm:second_moment}, and \cref{lemma:k} that were presented in the main text. We start with a restatement and proof of \cref{thm:first_moment}, which gives the first moment of the output probabilities. 
\begin{theorem} \label{thm:first_moment_supp}
The sum over graphs in $\mathbb{G}^{1}_{n}$ satisfies 
\begin{equation}\label{eqn:g_first_moment_supplement}
   \sum_{G\in\mathbb{G}^{1}_{n}}k^{C(G)} = k(k+2)\dots (k+2n-2).
\end{equation}
and hence $M_1(k,n) = (2n-1)!!(k+2n-2)!!/(k-2)!!$.
\end{theorem}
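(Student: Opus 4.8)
The plan is to prove the combinatorial identity \cref{eqn:g_first_moment_supplement} by induction on $n$ and then read off $M_1$ using the relation $M_1(k,n) = (2n-1)!!\sum_{G\in\mathbb{G}^1_n}k^{C(G)}$ derived before the theorem statement. First I would record the structural fact that any $G\in\mathbb{G}^1_n$ is the edge-disjoint union of the fixed black perfect matching $\{(O_{2j-1},O_{2j})\}_{j=1}^n$ and an arbitrary red perfect matching $R$; as a $2$-regular multigraph this is a disjoint union of even cycles (a doubled edge counting as a $2$-cycle), so $C(G)$ equals the number of such cycles. Writing $f(n)\coloneqq\sum_{G\in\mathbb{G}^1_n}k^{C(G)}$, the base case $n=1$ is immediate: the unique red matching pairs $O_1$ with $O_2$, giving one component and $f(1)=k$, matching the right-hand side.

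For the inductive step I would partition the red matchings by the red partner $O_x$ of $O_1$ and establish $f(n)=(k+2n-2)\,f(n-1)$. When $x=2$, the vertices $O_1,O_2$ are joined by both a black and a red edge and form an isolated $2$-cycle; deleting it leaves, after the relabeling $O_{i+2}\mapsto O_i$, an arbitrary element of $\mathbb{G}^1_{n-1}$, so these terms contribute $k\cdot f(n-1)$. When $x\geq 3$ (there are $2n-2$ such choices), I would merge $O_1,O_2,O_x$ into a single super-vertex $V$: the internal black edge $O_1O_2$ and red edge $O_1O_x$ vanish, while $V$ inherits exactly one external black edge (the black partner of $O_x$, which lies outside $\{O_1,O_2\}$ since $x\geq 3$) and exactly one external red edge (the red partner of $O_2$). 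The result is again a union of a black and a red perfect matching on $2n-2$ vertices, i.e.\ an element of $\mathbb{G}^1_{n-1}$ up to relabeling, and since $C$ and the red-matching sum are invariant under relabeling this contributes $f(n-1)$ per value of $x$.

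The crux, and the step I expect to require the most care, is showing that for each fixed $x\geq 3$ this merge is a $C$-preserving bijection onto $\mathbb{G}^1_{n-1}$. For the component count I would invoke the general fact that contracting a set of vertices already lying in one connected component (here $O_1,O_2,O_x$ are mutually connected through the black edge $O_1O_2$ and red edge $O_1O_x$) leaves the number of connected components unchanged. For bijectivity I would exhibit the inverse explicitly: given a red matching $R'$ on the reduced vertex set, reconstruct $R$ by declaring $O_1O_x$ red, setting the red partner of $O_2$ equal to the red partner of $V$ in $R'$, and keeping all remaining red edges; this is a two-sided inverse to the merge and respects the fixed-$x$ constraint. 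I would then dispatch the harmless edge cases—$x$ even versus odd (which only changes which neighbor is the black partner of $O_x$) and the possibility that the black and red partners of $V$ coincide (yielding a legitimate $2$-cycle in the reduced graph)—to confirm the map is well defined in all cases.

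Summing the two cases gives $f(n)=k\,f(n-1)+(2n-2)\,f(n-1)=(k+2n-2)\,f(n-1)$. Finally I would solve this recursion with $f(1)=k$ to obtain $f(n)=k(k+2)\cdots(k+2n-2)$, which is \cref{eqn:g_first_moment_supplement}, and multiply by the degeneracy factor to get $M_1(k,n)=(2n-1)!!\,k(k+2)\cdots(k+2n-2)$. Recognizing the product as $(k+2n-2)!!/(k-2)!!$ in double-factorial notation yields the stated closed form.
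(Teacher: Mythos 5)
Your proposal is correct and follows essentially the same route as the paper's proof: induction on $n$ with a case split on the red partner of $O_1$, where $x=2$ contributes $k\,f(n-1)$ via an isolated two-vertex component and each of the $2n-2$ choices $x>2$ contributes $f(n-1)$ via the component-preserving merge of $O_1,O_2,O_x$, giving $f(n)=(k+2n-2)f(n-1)$. Your added care about the bijectivity of the merge and its edge cases is a more explicit justification of a step the paper treats briefly, but it is the same argument.
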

\begin{proof}
We proceed by induction on $n$. Let $f(k,n)$ be the LHS of \cref{eqn:g_first_moment_supplement}. For the base case $n=1$, there is only a single possible graph $G$ that has a single connected component. Thus $f(k,1) = k$. For the inductive step, which is visualized in \cref{fig:first_moment_induction}, consider two subsets of $\mathbb{G}_{n}^{1}$. The first set has graphs that possess a red edge between $O_{1}$ and $O_{2}$, which means that these two vertices form their own connected component (recall that $O_{1}$ and $O_{2}$ are always connected with a black edge). Summing $k^{C(G)}$ over all graphs of this type then yields a contribution of $k f(k,n-1)$. The other subset of $\mathbb{G}_{n}^{1}$ has graphs that possess a red edge between $O_{1}$ and a vertex besides $O_{2}$, say $O_{x}$. In these graphs, the number of connected components in the graph does not change if one collapses the three vertices $O_{1}$, $O_{2}$, and $O_{x}$ into a single vertex (because they are all connected by either a black or red edge). Therefore, because there are $2n-2$ choices for the vertex $O_{x}$ linked to $O_{1}$ by a red edge, we get an overall contribution of $(2n-2)f(k,n-1)$ when summing $k^{C(G)}$ over these graphs. 
\begin{figure}[ht!]
    \centering
    \includegraphics[width=.75\linewidth]{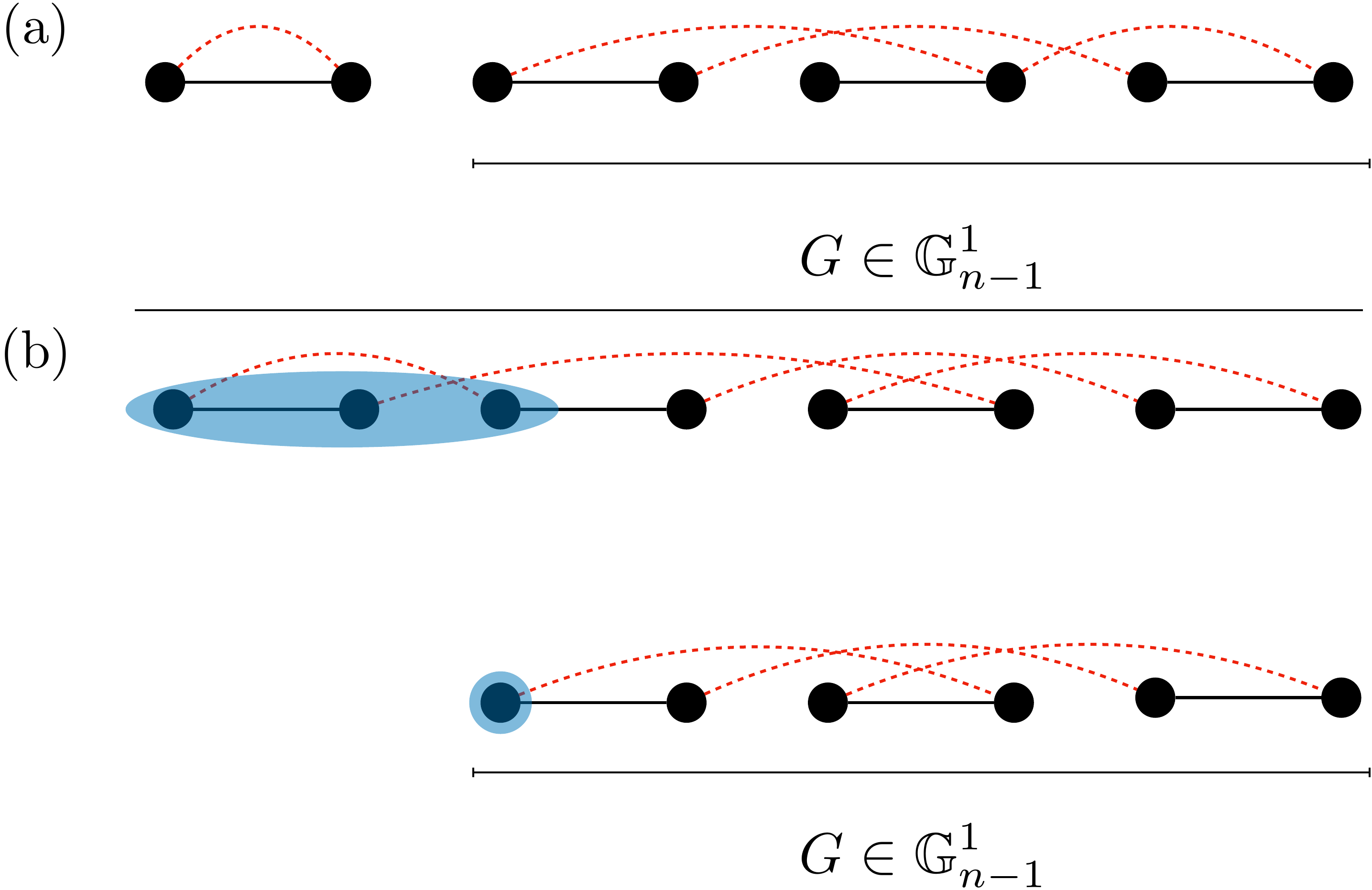}
    \caption{Visualization of the inductive step in the proof of the first moment of the output probabilities (\cref{thm:first_moment} in the main text, restated here in the SM as \cref{thm:first_moment_supp}). The inductive step proceeds in two cases that are determined by the red edge that connects to the first vertex in the graph in $\mathbb{G}_{n}^{1}$. 
    In (a), we consider the case where the first two vertices, which are linked by a black (solid) edge, are also linked by a red (dashed) edge, meaning they comprise a single connected component. This contributes a factor of $k$ times the contribution from a graph in $\mathbb{G}_{n-1}^{1}$, which comes from the remaining $2n-2$ vertices and their edges. (b) considers the case where the first vertex is linked via a red edge to a different vertex $O_{x} \neq O_{2}$ for which there are $2n-2$ choices (here $a=3$). The number of connected components does not change after identifying and combining the three vertices that are connected in this way (visualized by the blue background), meaning we again reduce down to a graph in $\mathbb{G}_{n-1}^{1}$, but this time without the multiplicative factor of $k$.}
    \label{fig:first_moment_induction}
\end{figure}

Overall then, we find that
\begin{align}
    f(k,n) &= kf(k, n-1) + (2n-2)f(k,n-1)\\
           &= (k+2n-2)f(k,n-1) \\
           &= (k+2n-2)(k+2n-4)\dots (k+2) k,
\end{align}
which proves the formula (and where the inductive hypothesis is used in the final equality).
\end{proof}

We note briefly that the structure of this proof is similar to that used in Ref.~\cite{arrazolaQuantumApproximateOptimization2018} to calculate $\underset{X\sim\mathcal{G}^{n\times n}}{\E}[\abs{\haf{X}}^{4}]$. The proofs are similar because there are four copies of $X$ in each, but the proofs are not identical given that different definitions of the hafnian are used. Additionally, similar graphs, and a similar calculation involving enumerating the number of graphs of a given number of connected components, show up in the bioinformatic study of breakpoint graphs (which are a type of graph defined by two perfect matchings that show up in the theory of comparative genomics) \cite{feijaoDistributionCyclesPaths2015}. 

We next move on to a proof of \cref{thm:second_moment}, which gives the form of the second moment as a polynomial in $k$. We again restate the theorem for convenience.

\begin{theorem}\label{thm:second_moment_supp}
The second moment $M_2(k,n)$ is a degree-$2n$ polynomial in $k$ and can be written as $ M_2(k,n) = (2n-1)!! \sum_{i = 1}^{2n} c_{i}k^{i}$, where $c_{i}$ is the number of graphs $G\in \mathbb{G}^{2}_{n}$ that have $i$ connected components. 
\end{theorem}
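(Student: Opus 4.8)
The plan is to start from \cref{eqn:second_moment_polynomial_supp}, $M_2(k,n)=(2n-1)!!\sum_{G\in\mathbb{G}^2_n}k^{C(G)}$, and simply collect the graphs according to their number of connected components: writing $c_i$ for the number of $G\in\mathbb{G}^2_n$ with $C(G)=i$ turns the sum into $\sum_{G}k^{C(G)}=\sum_i c_i k^i$, which already shows that $M_2(k,n)$ is a polynomial in $k$ of the stated form with the stated coefficients. All that then remains is to pin down the range of $i$, i.e.\ to show that every $G\in\mathbb{G}^2_n$ satisfies $1\le C(G)\le 2n$ and that the value $2n$ is actually attained, so that the degree is exactly $2n$.

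The lower bound $C(G)\ge 1$ is immediate, and the degree will equal $2n$ once the upper bound is in hand, since \cref{lemma:k}(ii) gives $c_{2n}=(2n)!!>0$ so that the leading coefficient does not vanish. The heart of the argument is therefore the upper bound $C(G)\le 2n$. First I would record the structural fact that each $G\in\mathbb{G}^2_n$ is $2$-regular: every vertex lies in exactly one red edge (the red edges form a perfect matching within each row) and in exactly one black edge (within any column-pair the three black edges of each of the four types form a perfect matching on that column-pair's six vertices, and distinct column-pairs act on disjoint vertex sets). Consequently $G$ is a disjoint union of cycles alternating red and black edges, and $C(G)$ is precisely the number of such cycles.

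To bound the number of cycles I would exploit the asymmetry of the $P$-row: in all four edge types each $P$-vertex is joined by its black edge to an $O$- or $Q$-vertex, so no black edge ever connects two $P$-vertices. Because the red partner of a $P$-vertex is again a $P$-vertex lying in the same cycle, every cycle contains an even number of $P$-vertices; hence any cycle meeting the $P$-row contains at least two of the $2n$ $P$-vertices, so there are at most $n$ such cycles. Furthermore, the black edges leaving the $2t$ $P$-vertices of a given cycle reach $2t$ \emph{distinct} $O$/$Q$-vertices of that same cycle (distinct because each $O$/$Q$-vertex carries a single black edge), so a $P$-meeting cycle uses at least as many $O$/$Q$-vertices as $P$-vertices. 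Summing over $P$-meeting cycles shows they consume at least $2n$ of the $4n$ total $O$/$Q$-vertices, leaving at most $2n$ for the $P$-free cycles; as every cycle has length at least two, there can be at most $n$ $P$-free cycles. Altogether $C(G)\le n+n=2n$, which together with the grouping argument completes the proof.

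I expect the main obstacle to be exactly this last counting step---establishing that a cycle touching the $P$-row spends at least as many $O$/$Q$-vertices as $P$-vertices---since this is what sharpens the naive bound $C(G)\le 3n$ (from $6n$ vertices in cycles of length $\ge 2$) down to $2n$; the polynomial structure and the degree itself then follow directly from \cref{eqn:second_moment_polynomial_supp} and \cref{lemma:k}.
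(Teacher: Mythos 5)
Your proposal is correct, but the key step---the upper bound $C(G)\le 2n$---is argued by a genuinely different route than the paper's. The paper works with component \emph{sizes}: since the black edges form a perfect matching, every connected component has an even number of vertices; a size-$2$ component requires a pair joined by both a black and a red edge, and since red edges live inside rows this forces a \emph{horizontal} black edge, of which each column-pair supplies at most one (and only in type-1 and type-4 patterns); all remaining vertices then lie in components of size at least $4$, so the count is maximized at one size-$2$ component plus two halves of size-$4$ components per six vertices, i.e.\ $2n$ total. You instead note that black and red edges are each perfect matchings, so $G$ is $2$-regular and splits into alternating cycles, and then exploit the asymmetry of the $P$-row (no black $P$--$P$ edges): at most $n$ cycles meet the $P$-row, and those cycles absorb at least $2n$ of the $4n$ $O$/$Q$-vertices, leaving room for at most $n$ $P$-free cycles. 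Your counting is clean and, if anything, more rigorous in exposition than the paper's informal ``two components per set of six vertices'' maximization; what it does not produce is the paper's by-product, namely the \emph{characterization} of the extremal graphs (only type-1/type-4 black edges, horizontal edges doubled by red edges, vertical edges paired within type), which the paper reuses to compute $c_{2n}=(2n)!!$. This leads to the one genuine wrinkle in your write-up: you invoke \cref{lemma:k}(ii) to get $c_{2n}>0$, but the paper proves that part of the lemma \emph{using} the extremal analysis from this theorem's proof, so as organized your argument is circular. The fix is trivial and worth making explicit: exhibit a single graph with $2n$ components, e.g.\ all type-1 black edges together with red edges $O_{2j-1}O_{2j}$, $P_{2j-1}P_{2j}$, $Q_{2j-1}Q_{2j}$ for every $j$, which yields $n$ doubled-edge components and $n$ four-cycles, hence $c_{2n}\ge 1$ with no forward reference.
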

\begin{proof}
As mentioned in the main text, once \cref{eqn:second_moment_polynomial} is derived, the theorem follows after deriving the correct limits of summation. Trivially, the fewest possible number of connected components is $1$. To see that the largest possible number of connected components is $2n$, we consider the four patterns of black edges that are illustrated in \cref{fig:second_moment_red_edges}(b) of the main text and how many connected components can possibly occur in graphs with those different patterns. See also \cref{fig:limits_examples} for a reminder of the patterns of black edges and a visual explanation of the following argument. 

First note that, because all vertices are paired via black edges, every connected component has an even number of vertices. Therefore, the two smallest sizes of connected components are $2$ and $4$ vertices. In order to get a connected component of size $2$, one must connect a pair of vertices with both a black and a red edge. 
Red edges are constrained to lie in a single row, meaning only type-1 and type-4 patterns of black edges, which contain a pair of vertices connected by a black edge in the same row, can yield a connect component of size $2$.
Pairing off the remaining vertical black edges yields connected components of size $4$, the next smallest size.

Therefore, the maximum number of connected components arises from taking only type-1 and type-4 edges. This requires connecting each horizontal black edge by red edge (creating a connected component with $2$ vertices) and then pairing off the vertical edges coming from the same type. This allows for the maximal $2$ connected components per set of six vertices, meaning $2n$ total connected components. 
\end{proof}
\begin{figure}[ht!]
    \centering
    \includegraphics[width = 0.75\linewidth]{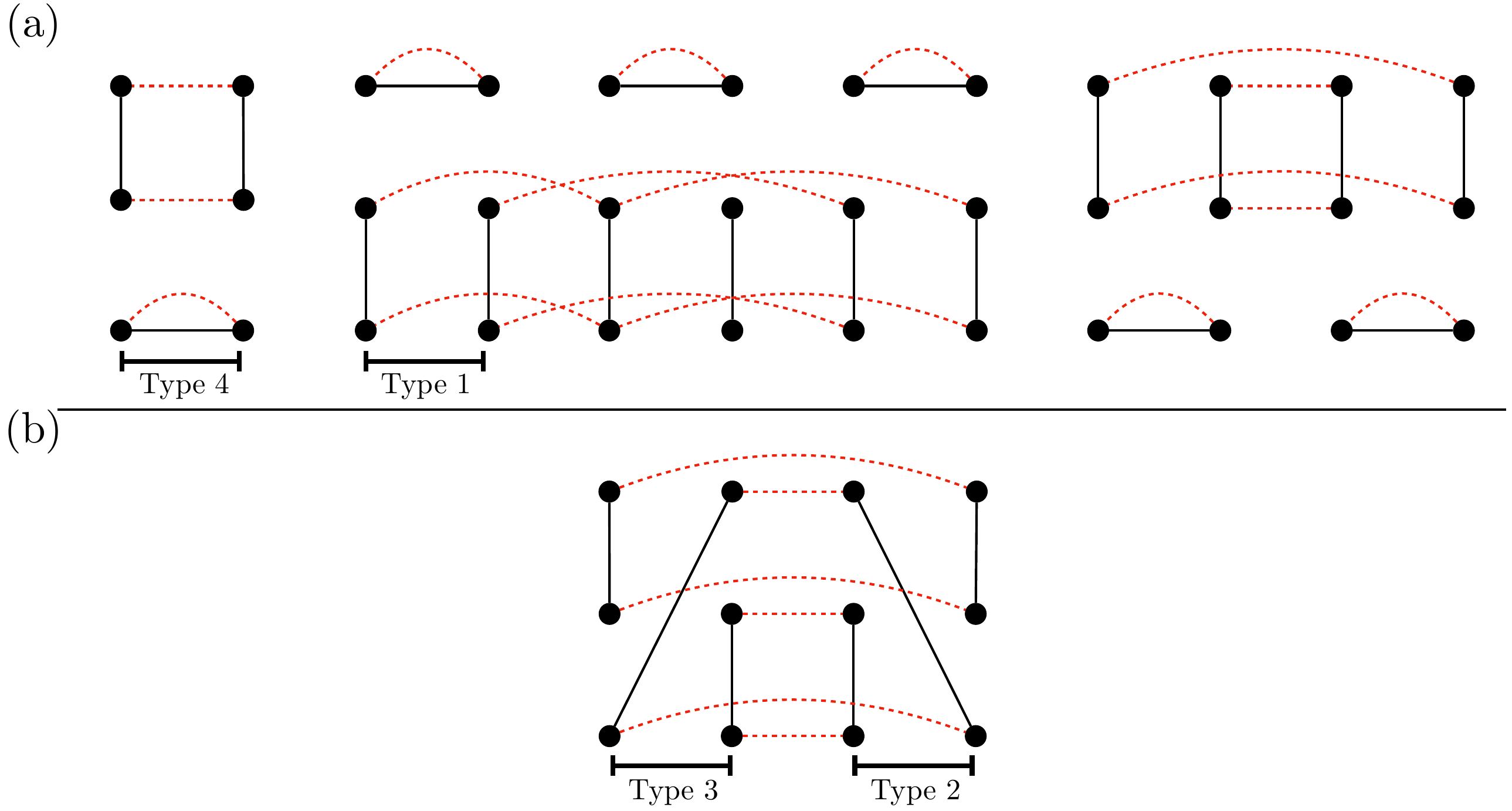}
    \caption{(a) Example graph in $\mathbb{G}_{6}^{2}$ showing how to achieve an average of two connected components per set of six vertices using only type-1 and type-4 sets of edges. All vertices connected by horizontal black (solid) edges are also connected by red (dashed) edges. All type-1 vertical edges are paired off, as are type-4 vertical edges. Note that this graph would correspond to $z =1 + 3\times 4^{5} + 0 \times 4^{4} + 0 \times 4^{3} + 0 \times 4^{2} + 3 \times 4^{1} + 3 \times 4^{0} = 3088$. (b) Example showing how using type-2 and type-3 black edges lead to, at most, three connected components per two sets of six vertices.}
    \label{fig:limits_examples}
\end{figure}

We also prove \cref{lemma:k}, which we again restate for convenience:

\begin{lemma}\label{lemma:k_supp}
We have that 
\begin{enumerate}
    \item[i.] $M_2(1,n) = ((2n-1)!!)^4 4^n$
    \item[ii.]  $ c_{2n} = (2n)!!$
\end{enumerate}
\end{lemma}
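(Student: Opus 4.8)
The plan is a direct evaluation of the Kronecker-$\delta$ expression \cref{eqn:second_moment_delta} at $k=1$. When $k=1$ every index $o_i,p_i,q_i$ is forced to equal $1$, so each Kronecker $\delta$ evaluates to $1$ and each of the four terms in the $j$-th factor contributes $1$; the $j$-th factor is therefore $4$, and the product over $j$ is $4^n$. The index sum collapses to a single term, while the sum over $\tau,\alpha,\beta\in S_{2n}$ produces $((2n)!)^3$ identical contributions. Hence
\begin{equation}
M_2(1,n)=\frac{(2n)!}{(2^n n!)^4}\,\big((2n)!\big)^3\,4^n=\left(\frac{(2n)!}{2^n n!}\right)^{4}4^n=\big((2n-1)!!\big)^4 4^n,
\end{equation}
where the last step uses $(2n)!=2^n n!\,(2n-1)!!$. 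This route requires no graph theory.

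\textbf{Part (ii).} Here I would count the graphs in $\mathbb{G}^2_n$ achieving the maximal $C(G)=2n$. First I would sharpen the extremality argument from the proof of \cref{thm:second_moment_supp}. Since every vertex carries exactly one black and one red edge, each component is an alternating black/red cycle of even size; writing the $2n$ component sizes as $2m_i$ with $\sum_i m_i=3n$ forces $\sum_i(m_i-1)=n$, so at least $n$ components have size $2$. A size-$2$ component requires a black edge closed by a red edge in the same row, which only the single horizontal black edge of a type-$1$ or type-$4$ block provides. As there are at most $n$ such horizontal edges, there are exactly $n$ size-$2$ components; consequently every block is type-$1$ or type-$4$, every horizontal edge is closed, and the remaining $n$ components all have size $4$, each built from a pair of vertical black edges joined by two red edges.

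The combinatorial heart is then a no-mixing observation: a size-$4$ component glues two vertical black edges using two within-row red edges, so the multiset of the four endpoints' rows must split into two equal pairs. A type-$1$ vertical edge occupies rows $\{P,Q\}$ and a type-$4$ vertical edge occupies rows $\{O,P\}$, so a type-$1$ and a type-$4$ edge cannot be paired; only type-$1$--type-$1$ (pairing $P$ with $P$ and $Q$ with $Q$) and type-$4$--type-$4$ (pairing $O$ with $O$ and $P$ with $P$) are allowed. I would exploit this through the $P$-row, whose $2n$ vertices all carry a vertical black edge. Each extremal graph is determined by (a) an arbitrary red perfect matching $M_P$ on the $P$-vertices and (b) a type assignment to the $n$ blocks; the no-mixing rule forces the type to be constant on the connected components of the auxiliary graph whose black edges are $(P_{2j-1},P_{2j})$ and whose red edges are $M_P$---and once $M_P$ and the types are fixed, the $O$- and $Q$-row red edges are completely forced (within-block closures on one side, $P$-matched closures on the other). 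This auxiliary graph is precisely an element $G'$ of $\mathbb{G}^1_n$, admitting $2^{C(G')}$ type assignments, giving the bijection
\begin{equation}
c_{2n}=\sum_{G'\in\mathbb{G}^1_n}2^{C(G')}=2\cdot 4\cdots(2n)=(2n)!!,
\end{equation}
where the middle equality is \cref{eqn:g_first_moment_supplement} evaluated at $k=2$. This is the ``reduction to the first moment with $k=2$'' advertised in the main text.

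\textbf{Main obstacle.} The delicate step is pinning down the extremal configurations exactly, in particular the no-mixing claim together with the fact that the $O$- and $Q$-row red matchings carry no independent freedom; getting this bookkeeping right is what turns the count into a clean application of \cref{thm:first_moment_supp} at $k=2$, rather than the messier binomial sum $\sum_{s}\binom{n}{s}(2s-1)!!\,(2(n-s)-1)!!$ that one could evaluate instead but which obscures the structure.
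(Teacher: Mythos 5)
Your proof is correct and follows essentially the same route as the paper's: part (i) is the identical direct evaluation of the Kronecker-$\delta$ expansion at $k=1$, and part (ii) is the same reduction of the count of extremal graphs (those with only type-1 and type-4 blocks, horizontal edges closed, and same-type pairings of vertical edges) to the first-moment sum $\sum_{G'\in\mathbb{G}^1_n}2^{C(G')} = f(2,n)=(2n)!!$. If anything, your part (ii) is tighter than the paper's version---the component-size counting argument that forces exactly $n$ two-vertex components and the row-based no-mixing rule make rigorous what the paper justifies largely through its figures.
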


\begin{proof}
Part (i): examine \cref{eqn:second_moment_delta}. Because $k = 1$, $o_{i} = p_{i} = q_{i} = 1$ for all $i$. Thus, regardless of the permutation, all  Kronecker $\delta$s are always satisfied. This means that, independent of the permutation, each factor is always $4$ such that the product becomes $4^{n}$. The sum over the three copies of $S_{2n}$ then simply yields a factor of $(2n)!^{3}$. The result then follows. 

Part (ii): we argued in the proof of \cref{thm:second_moment_supp} that the leading-order term in the polynomial expansion of the second moment is $k^{2n}$, and it comes from graphs that consist of only type-1 and type-4 black edges. Each type-1 and type-4 set of edges contains a horizontal black edge, and the two vertices linked by that black edge also must be linked by a red edge to create a $2$-vertex connected component. Additionally, the vertical edges of the type-1 sets need to be paired off via red edges; similarly, the vertical edges of the type-4 sets need to be paired off. This ensures that each other connected component has exactly $4$ vertices, maximizing the number of possible connected components.  

\cref{fig:leading_order} visualizes how to now reduce the remaining calculation to the value of the first moment when $k = 2$. If we imagine collapsing each pair of adjacent vertical edges (i.e., those coming from the same group of $6$ vertices) onto a pair of vertices connected by a black edge, we reproduce the atomic graph from the proof of the first moment. Here, by atomic graph, we mean the vertices and the fixed black edges which are shared by all graphs; the red edges are not yet included. Explicitly, there are $2n$ vertices, and vertices $O_{2i-1}, O_{2i}$ are connected with a black edge. The black edges here act to identify that the original uncollapsed vertical edges were of the same type. Drawing red edges in the simplified graph on $2n$ vertices corresponds to pairing off vertical edges in the original graph on $6n$ vertices with red edges. Note that this also implies that red edges connect vertical edges of the same type. Therefore, a connected component in the simplified graph could correspond to two preimages in the original graph: either all type-1 vertical edges, or all type-4 vertical edges. Then, by summing over all graphs and weighting each connected component by $2$, we are effectively evaluating $f(2, n)$ in \cref{eqn:g_first_moment_supplement}, i.e.:
\begin{equation}
    f(2,n) = \frac{(k+2n-2)!!}{(k-2)!!}\bigg|_{k=2} = (2n)!!.
\end{equation}

\end{proof}
\begin{figure}[ht!]
    \centering
    \includegraphics[width=.6\linewidth]{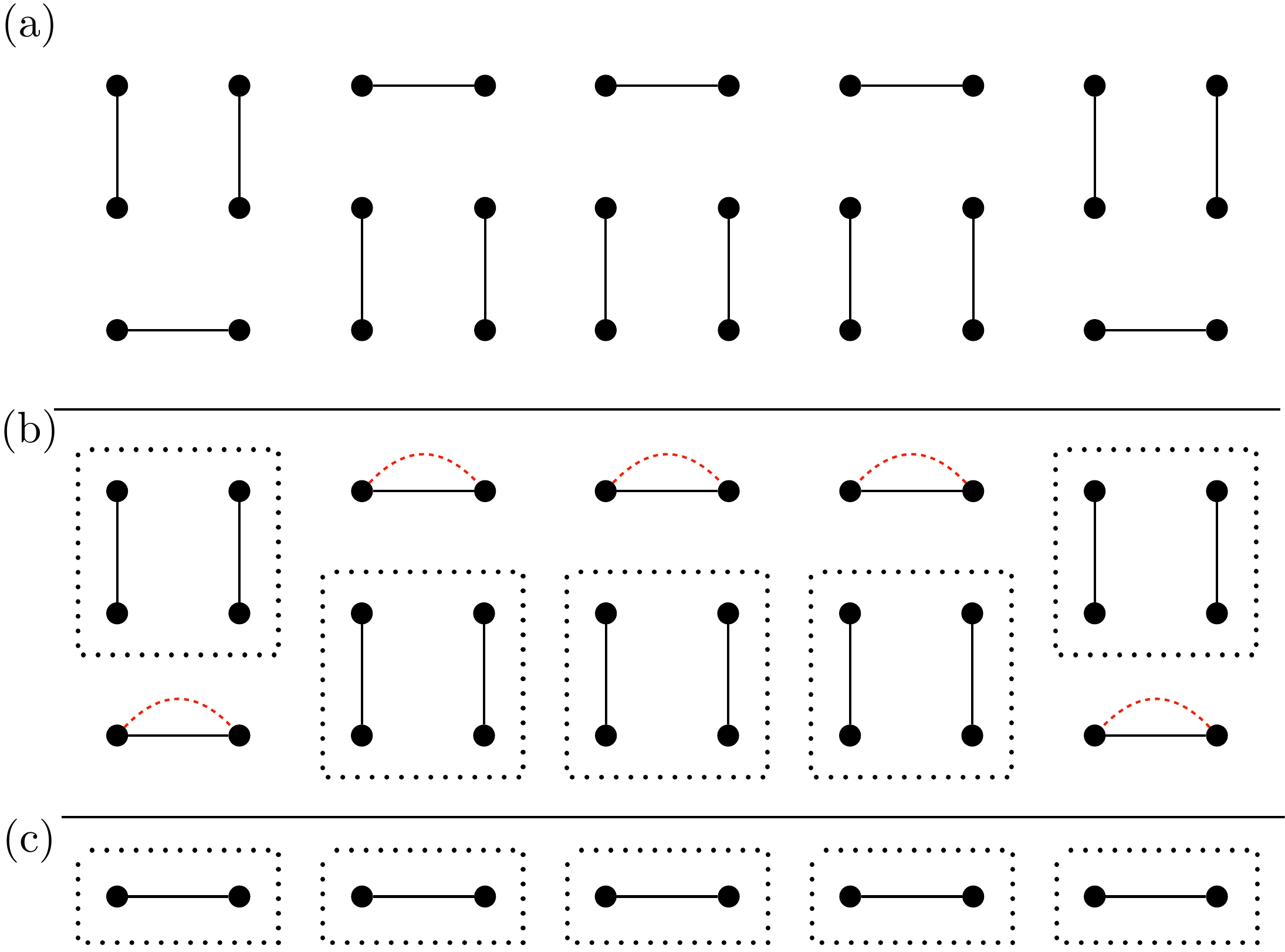}
    \caption{Visualization of how the calculation of the coefficient of the leading-order term in the second moment can be reduced to the $k=2$ case of the first moment.
    Recall that black edges are solid and red edges are dashed. (a) As proven in \cref{thm:second_moment}, graphs that maximize the number of connected components contain only type-1 and type-4 black edges. (b) To maximize the number of connected components, the horizontal black edges must form their own connected component with two vertices, meaning their vertices must be connected by a red edge. Furthermore, each vertical black edge must be paired off with exactly one other vertical black edge of the same type, forming a connected component with $4$ vertices. We draw dotted boxes around the two black vertical edges to show that they come from the same type. (c) If we collapse each vertical edge onto a single vertex and then connect that vertex to the vertex stemming from its adjacent edge in the original graph (i.e. the other vertical edge from the same group of six vertices), then we reduce to the atomic graph (i.e., the graph with the fixed black edges, but without red edges) from the proof of the first moment. Red edges on this collapsed graph would then correspond to pairing off vertical edges in the original graph with red edges. Because paired edges in the original graph can only exist between edges of the same type, each connected component in the simplified graph could have come from either type-1 or type-4 vertical edges. This is equivalent to setting $k$, the contribution from each connected component, to $2$, and then evaluating $f(2,n)$.}
    \label{fig:leading_order}
\end{figure}

\section{Approximate Hiding and Asymptotics of the First Moment}\label{sec:asymptotics}
In this section, we discuss more thoroughly the connection between hiding, the relevant sample space, and the first moment of squared hafnians of generalized COE matrices.  

In the main text, we introduce the normalized average outcome-collision probability as a measure of anticoncentration. Fixing the output state to have $2n$ photons, we write this as $|\Omega_{2n}| \E_{U \in U(m)} [\sum_{\vec n \in \Omega_{2n}} P_U(\vec n)^2] $, where $\Omega_{2n}$ is the space of collision-free outcomes with $2n$ photons in $m$ modes, and its size, which we write as  $|\Omega_{2n}|$, is simply $\binom{m}{2n}$. We here work specifically with the non-collisional sample space because, in order for hiding to hold, collisions have to be negligible (a non-negligible likelihood of repeated columns in $U^{\top}_{1_{k},\vec{n}}U_{1_{k},\vec{n}}$ would prevent this distribution from being well approximated by $X^{\top}X$ with $X$ Gaussian). And, indeed, when $n = o(\sqrt{m})$, it is easy to see that the size of the full sample space of $2n$ photons in $m$ modes, $\binom{m+2n-1}{2n}$, approaches $|\Omega_{2n}| = \binom{m}{2n}$ when $n \gg 1$. In particular, 
\begin{equation}
    \frac{m^{2n}}{(2n)!} \leq \frac{(m+2n-1)!}{(m-1)!(2n)!} \leq \frac{(m+2n-1)^{2n}}{(2n)!}
\end{equation}
and 
\begin{equation}
   \frac{\frac{(m+2n-1)^{2n}}{(2n)!}}{ \frac{m^{2n}}{(2n)!}} = \left(1+ \frac{2n-1}{m}\right)^{2n} \stackrel{n \gg 1}{\longrightarrow} 1
\end{equation}
because $(2n-1)/m = o(1/n)$. That is, $\Omega_{2n}$ is the dominant contribution to the full sample space. 

We proceed to then replace $|\Omega_{2n}|$ with the expected value of the outcome probabilties, $\mathbb{E}_{U}[P_{U}(\vec{n})]$, that is, the first moment over input unitaries of a specific outcome. This holds assuming that the hiding property in Conjecture~\ref{con:gaussian} holds. Roughly, hiding ensures that we do not preference any individual outcome, meaning we can replace the expected value over all probabilities with that over unitaries for a single probability. By linearity of expectation (and the fact that probabilities sum to unity), this expectation over unitaries should simply be the inverse of the size of the sample space of non-collisional outcomes. Finally, Conjecture~\ref{con:gaussian} also gives us an approximate equality between $\mathbb{E}_{U}[P_{U}(\vec{n})]$ and $M_{1}$, the first moment of the squared hafnian of generalized COE matrices (properly rescaled to contain the correct prefactors).

We therefore now show that our calculation of the first moment in the hiding regime is consistent with the above discussion in the sense that $\mathbb{E}_{U}[P_{U}(\vec{n} | \sum_i \vec{n}_i = 2n)] = \mathbb{E}_{U}[P_{U}(\vec{n})]/P(2n)$ is asymptotically equal to $|\Omega_{2n}|^{-1}= \binom{m}{2n}^{-1}$ assuming Conjecture~\ref{con:gaussian}. Here, $P(2n)$ is the probability that our output is in the $2n$-photon sector (i.e., the probability that $\Omega_{2n}$ is the proper sample space to consider in the first place). 

Recall our input state has the first $k$ of $m$ modes prepared in the single-mode squeezed vacuum state with identical squeezing parameter $r$, and the remaining $m - k$ modes are prepared in the vacuum state. 
The probability of an outcome $\vec{n}$ is given by \cref{eqn:gbs_probability}:
\begin{equation}
    P_{U}(\vec{n}) = \frac{\tanh^{2n} r}{\cosh^{k} r}\abs{\haf(U^{\top}_{1_{k},\vec{n}}U_{1_{k},\vec{n}})}^{2},
\end{equation}
where $U_{1_{k},\vec{n}}$ is the submatrix of $U$ given by the first $k$ rows and the columns dictated by where $\vec{n}$ is nonzero. Define $\tilde{U}_{1_{k},\vec{n}}\coloneqq mU_{1_{k},\vec{n}}$. Using multiplicativity of the Hafnian, one finds
\begin{equation}
      P_{U}(\vec{n}) = \frac{\tanh^{2n} r}{\cosh^{k}r}\frac{1}{m^{2n}}\abs{\haf\paren*{\tilde{U}_{1_{k},\vec{n}}\tilde{U}^{\top}_{1_{k},\vec{n}}}}^{2}.
\end{equation}
Assuming Conjecture~\ref{con:gaussian}, then $U^{\top}_{1_{k},\vec{n}}U_{1_{k},\vec{n}} \sim X^{\top}X$ where $X \sim \mathcal{N}(0,1/m)_{c}^{k \times 2n}$, which means $\tilde{U}^{\top}_{1_{k},\vec{n}}\tilde{U}_{1_{k},\vec{n}}\sim X^{\top}X$, but now $X\sim \mathcal{N}(0,1)_{c}^{k \times 2n}$. Then, by Conjecture~\ref{con:gaussian} and \cref{thm:second_moment}, we find
\begin{equation}
    \underset{U\in\mathrm{U}(m)}{\E}\bkt*{\abs{\haf\paren*{\tilde{U}^{\top}_{1_{k},\vec{n}}\tilde{U}_{1_{k},\vec{n}}}}^{2}} \approx  \underset{X\in\mathcal{G}^{k\times 2n}}{\E}\bkt*{\abs{\haf\paren*{X^{\top}X}}^{2}} = \frac{(2n)!}{2^{n}n!}\frac{(k+2n-2)!!}{(k-2)!!},
\end{equation}
where the first part of the equation is not an equality precisely because the hiding in Conjecture~\ref{con:gaussian} is not exact. This implies that
\begin{equation}\label{eqn:asymptotic_intermediate}
    \mathbb{E}_{U}[P_{U}(\vec{n})] \approx \frac{\tanh^{2n}r}{\cosh^{k}r}\frac{1}{m^{2n}}\frac{(2n)!}{2^{n}n!}\frac{(k+2n-2)!!}{(k-2)!!}.
\end{equation}

Now, a single-mode squeezed vacuum state with squeezing parameter $r$ and phase $\phi$ has Fock-state expansion given by
\begin{equation}
    \ket{\textrm{SMSV}} = \frac{1}{\sqrt{\cosh r}}\sum_{\ell = 0}^{\infty} (-e^{i\phi} \tanh r)^{\ell}\frac{\sqrt{(2\ell)!}}{2^{\ell}\ell!}\ket{2\ell}.
\end{equation}
Therefore, the probability of measuring $2\ell$ photons is 
\begin{equation}
    \abs{\braket{2\ell|\textrm{SMSV}}}^{2} = \frac{\tanh^{2\ell r}}{\cosh r}\frac{(2\ell)!}{(2^{\ell}\ell!)^{2}}. 
\end{equation}
Given $k$ independent single-mode squeezed vacuum states, the probability of finding $2n$ total photons is the $k$-fold convolution of the Fock-basis probability distribution of one single-mode squeezed vaccuum state:
\begin{equation}
    P(2n) = \sum_{2\ell_{1} + \dots + 2\ell_{k} = 2n}\prod_{i=1}^{k}\frac{\tanh^{2\ell_{i} r}}{\cosh r}\frac{(2\ell_{i})!}{(2^{\ell_{i}}\ell_{i}!)^{2}} = \frac{\tanh^{2n}r}{\cosh^{k}r}\frac{1}{2^{2n}} \sum_{2\ell_{1} + \dots + 2\ell_{k} = 2n}\prod_{i=1}^{k}\binom{2\ell_{1}}{\ell_{1}}. 
\end{equation}
This probability distribution is unchanged if the $k$ independent single-mode squeezed vacuum states are acted upon by a linear-optical unitary before measurement (such a unitary does not change the photon number, only the location of the photons). The combinatorial identity at the core of this $k$-fold convolution has been calculated before in Refs.~\cite{liMultifoldConvolutionsBinomial2019,changGeneralizationProbabilisticProof2011}. Specifically, 
\begin{equation}\label{eqn:k-fold_convolution}
    \sum_{2\ell_{1} + \dots + 2\ell_{k} = 2n}\prod_{i=1}^{k}\binom{2\ell_{1}}{\ell_{1}} = 4^{n}\binom{n-1+k/2}{n},
\end{equation}
where we note that \cref{eqn:k-fold_convolution} holds even in the case where $k$ is odd using a generalization of the binomial coefficients in terms of the $\Gamma$ function.

The overall probability of finding $2n$ photons from $k$ independent single-mode squeezed vacuum states, even after the application of a linear optical unitary, is therefore
\begin{equation}
    P(2n) = \frac{\tanh^{2n}r}{\cosh^{k}r}\frac{1}{2^{2n}} 4^{n}\binom{n-1+k/2}{n} = \frac{\tanh^{2n}r}{\cosh^{k}r}\binom{n-1+k/2}{n}. 
\end{equation}
We note that this expression, but not the full derivation, is also provided in Ref.~\cite{kruse_detailed_2019}. A bit of algebraic manipulation reveals
\begin{equation}
     P(2n) = \frac{\tanh^{2n}r}{\cosh^{k}r}\binom{n-1+k/2}{n} = \frac{\tanh^{2n}r}{\cosh^{k}r}\frac{(2n-1)!!(k+2n-2)!!}{(2n)!(k-2)!!} = \frac{1}{(2n)!} \frac{\tanh^{2n}r}{\cosh^{k}r} \underset{X\sim \mathcal{G}^{k \times 2n}}{\E}\bkt*{\abs{\haf(X^{\top}X)}^{2}}.
\end{equation}
According to \cref{eqn:asymptotic_intermediate}, then
\begin{equation}
    P(2n) \approx \frac{m^{2n}}{(2n)!}\mathbb{E}_{U}[P_{U}(\vec{n})],
\end{equation}
which finally implies
\begin{equation}
    \frac{\mathbb{E}_{U}[P_{U}(\vec{n})]}{P(2n)} \approx \frac{(2n)!}{m^{2n}} \approx \binom{m}{2n}^{-1} = |\Omega_{2n}|^{-1}, 
\end{equation}
where the first approximation is due to the fact that hiding is not exact, and the second approximation holds in the photon non-collisional regime. 

\section{Details on Definitions of Anticoncentration}
\label{app:definitions of anticoncentration}
In this section, we discuss some of the details behind our definition of anticoncentration and how it relates to the standard notion of anticoncentration often used in the literature. We also discuss how these different definitions interact when it comes to showing anticoncentration holds for the exact distribution of the output probabilities of GBS given anticoncentration of the approximate distribution. 
\subsection{Definitions}
We first discuss in somewhat more detail the relevance of anticoncentration to the argument for hardness of sampling from the output distribution of Gaussian Boson Sampling (GBS). This argument makes use of an approximate counting algorithm due to \textcite{stockmeyer_complexity_1983}. 
Roughly, we assume that there is an efficient sampling algorithm for GBS that, given a linear-optical unitary $U$, samples from a distribution $Q_U$ which is close up to a constant $\epsilon > 0$ to the ideal GBS distribution $P_U$ (recall \cref{eqn:gbs_probability} of the main text) in total-variation distance 
\begin{align}
    \mathsf{tvd}(P_U,Q_U) \coloneqq \frac 12 \sum_{\vec n} | P_U(\vec n) - Q_U(\vec n) |  \leq \epsilon. 
\end{align}
Supposing such a sampling algorithm exists, and given the so-called hiding property (see Sec. D of Ref.~\cite{hangleiterComputationalAdvantageQuantum2023} for details), we can use it as input to Stockmeyer's algorithm. 
Stockmeyer's algorithm then approximates the probability $P_U(\vec n)$ up to an error given by 
\begin{align}
\label{eq:error stockmeyer}
    \varepsilon = \frac 1 {\mathrm{poly}(n)} P_U(\vec n) + \frac{2 \epsilon}{|\Omega | \delta } \left( 1 + \frac 1 {\mathrm{poly}(n)} \right), 
\end{align}
with probability $1-\delta$ over $\vec n$, where $\Omega$ is the sample space on which $P_U$ is defined.
If it is sufficiently hard (\#\textsf{P}-hard, to be precise) to approximate the outcome probabilities $P_U(\vec n)$ up to the error \eqref{eq:error stockmeyer}, on the instances on which our approximation scheme achieves this error, this rules out the approximate sampling algorithm up to very reasonable complexity-theoretic conjectures (one of which is the non-collapse of the polynomial hierarchy, a generalization of the famous \textsf{P}\,$\neq$\,\textsf{NP}\ conjecture). 
The required property is thus what we call ``approximate average-case hardness,'' that is, the statement that any algorithm which is able to compute $P_U(\vec n)$ with probability $1-\delta$ over the instances up to the error \eqref{eq:error stockmeyer} is able to solve any \#\textsf{P}-hard problem (of the same difficulty as approximating the outcome probabilities $P_{U}(\vec{n})$ up to the error \eqref{eq:error stockmeyer}). 

While we know average case hardness of approximating the outcome probabilities up to error $2^{- \Omega(n \log n)}$ \cite{bouland_noise_2022,deshpandeQuantumComputationalAdvantage2022a}, it is only conjectured for the relevant approximation error given by either $c_1 P_U(\vec n)$ or $c_2/|\Omega|$ for constants $c_1, c_2>0$. 
Anticoncentration serves as evidence for the truth of the conjecture, the idea being the following: 
suppose that most of the outcome probabilities are very close to zero, i.e. $\ll \epsilon/2^{-n}$, meaning only a vanishing fraction of them are relevant. 
Then a high approximation error on the relevant probabilities is tolerable, because we only need to distinguish between relevant and irrelevant outcomes, and a sufficiently good approximation to the irrelevant ones is zero. 
This is a significantly easier task than if the distribution is highly spread out and a large fraction of the probabilities is ``relevant'' in the sense that all of the relevant probabilities are of the same order of magnitude as the uniform distribution. 

In the standard argument, this intuition is formalized as the statement   
\begin{align}
\label{eq:anticoncentration}
    \Pr_{U \in \mathrm U(m)}\left[ P_U(\vec n) \geq \frac \alpha {|\Omega|} \right]  \geq \gamma(\alpha),
\end{align}
for some constants $\alpha, \gamma(\alpha) > 0 $.
In this formulation, we have made crucial use of the hiding property, which asserts that the distribution over circuits is invariant under a procedure by which we ``hide'' a particular outcome $\vec n$ in the probability of obtaining a different outcome $\vec n'$ of a random circuit. 
This allows us to restrict our attention to the distribution over circuits of a fixed outcome $\vec n$.

The anticoncentration property \eqref{eq:anticoncentration} implies that the error \eqref{eq:error stockmeyer} is dominated by the first term on a $\gamma(\alpha)(1- \delta)$ fraction of the instances because with probability $\gamma(\alpha)$ we can upper bound the second term by $P_U(\vec n)$. 
But, if a large fraction of the probabilities is larger than uniform, then none of them can be much larger than uniform and, hence, the approximation error needs to be exponentially small. 
Thus, we expect that, in the presence of anticoncentration, approximating the outcome probabilities up to the error \eqref{eq:error stockmeyer} is much harder than without anticoncentration, lending credibility to the approximate average-case hardness conjecture. 

In our definition of anticoncentration, we consider the (normalized) average collision probability 
\begin{align}
\label{eq:average collision probability}
   P_2(\mathrm U(m)) \coloneqq |\Omega| \sum_{\vec n \in \Omega} \E_{U \in \mathrm U(m)} \left[P_U(\vec n)^2 \right] \\
   \stackrel{\text{hiding}}{=}|\Omega|^2 \E_{U \in \mathrm U(m)} \left[P_U(\vec n)^2 \right]. 
\end{align}
The collision probability is the probability that, were one to sample the distribution twice, one would receive the same outcome both times. For very flat distributions it is very small. 
With the normalization, the collision probability of the uniform distribution is given by $1$, which is its minimal value. 
On the other hand, the normalized collision probability of a fully peaked distribution with a single unit probability is given by $|\Omega|$. 

The average collision probability is thus another measure of the anticoncentration of the outcome probabilities in the ensemble of linear-optical unitaries. 
It is a more coarse-grained measure, though, because it is only an average quantity. 
Indeed, a (constantly) small average collision probability implies anticoncentration in the sense of \eqref{eq:anticoncentration} via the Paley-Zygmund inequality as 
\begin{align}
    \Pr_{U \in \mathrm U(m)} \left[ P_U(\vec n) \geq \frac \alpha {|\Omega|} \right] \geq (1- \alpha)^2 \frac 1 {P_2(\mathrm U(m))}. 
\end{align}
The relevant quantity of interest to anticoncentration is thus the inverse average collision probability $p_2(\mathrm U(m)) = 1/P_2(\mathrm U(m)$. 
Because by hiding the first moment $\E_U[P_U(\vec n)]$ must evaluate to the inverse size of the sample space, we can rewrite $p_2$ for GBS as 
\begin{align}
     p_2(\mathrm U(m)) = \frac{\E_{U \in \mathrm U(m)}[P_U(\vec n)]^2}{\E_{U \in \mathrm U(m)}[P_U(\vec n)^2]} \approx \frac{M_1(k,n)^2}{M_2(k,n)} = m_2(k,n).
\end{align} 
In the main text, we define various degrees of anticoncentration in terms of the inverse average collision probability $p_2$, which we recall here.
\begin{itemize}
     \item[(A)] 
We say that $P_U, U \in \mathrm U(m)$ \emph{anticoncentrates} if $p_2 = \Omega(1)$. 

\item[(WA)] We say that $P_U$ \emph{anticoncentrates weakly} if $p_2 = \Omega(1/n^a)$ for some $a = O(1)$. 

\item[(NA)] And we say that it \emph{does not anticoncentrate} if $p_2 = O(1/n^{a})$ for any constant $a > 0$. 
 \end{itemize} 

\noindent Here, we motivate those definitions in more detail. 
Clearly (A) implies anticoncentration in the sense of \cref{eq:anticoncentration}, hence the definition.

\paragraph{Lack of anticoncentration (NA)}
Ignoring the average over unitaries, $p_2$ upper-bounds the support of the distribution by $p_2 |\Omega|$, as the maximum-entropy state is the uniform distribution. 
Let us assume for simplicity that $p_{2}$ is actually exponentially small. An exponentially small value of $p_2$ implies that the average support of the outcome distributions $P_U$ is exponentially small, implying that at least a constant fraction (over $U$) of the distributions $P_U$ has exponentially small support, and conversely exponentially larger than uniform probabilities on that support.    
At least for those distributions, this implies an exponentially larger error tolerance compared to $1/|\Omega|$. 
Such an exponentially larger error tolerance makes the approximate average-case hardness conjecture significantly stronger, presumably even untenable. 

While it is possible that for a constant fraction of the $U$ we are in this scenario (see Sec.~V.C of Ref.~\cite{deshpande_tight_2022} for an example), while for another constant fraction, the probabilities are highly spread out, making the anticoncentration property \eqref{eq:anticoncentration} true, this seems like an extremely unlikely state of affairs. 
Indeed, the hiding property implies that it should not matter whether we talk about the distribution over unitaries or over outcomes, which means that the situation described above is a generic feature, rendering \eqref{eq:anticoncentration} false in case $p_2$ is exponentially small. 

\paragraph{Weak anticoncentration (WA)}
Our results show that weak anticoncentration holds in the regime of $k \rightarrow \infty$. But why do we think of a polynomially decaying $p_2$ as \emph{weak} anticoncentration rather than no anticoncentration? 

We argue that this is a meaningful regime in the sense that there is a stronger---but not inconceivable---approximate average-case hardness conjecture associated with the weak anticoncentration regime. 
To see this, observe that weak anticoncentration implies anticoncentration in the sense of \cref{eq:anticoncentration} with $\gamma(\alpha) = \Omega(1/\textrm{poly}(n))$, which means that an inverse polynomial fraction of the outcome probabilities are larger than uniform. 
Technically, using Stockmeyer's algorithm we can thus achieve a multiplicative error for an inverse polynomial fraction of the outcome probabilities. 
To rule out an efficient classical sampler, we thus need to conjecture approximate average-case hardness with constant relative errors for any inverse polynomial fraction of the instances. 
Equivalently, we can formulate a similar conjecture for a polynomially large relative or subexponentially large additive error on a constant fraction.
While clearly much stronger than the requirement of anticoncentration, this is qualitatively different from the lack of anticoncentration scenario (NA), where the difference is superpolynomial. 

\subsection{Anticoncentration of the Exact Distribution}

We also need to show that our definition of anticoncentration allows us to translate between anticoncentration of the approximate distribution based on the hafnians of random Gaussian matrices, which we will refer to as $P_{X}(\vec{n})$, and anticoncentration of the true distribution, $P_{U}(\vec{n})$. For a given output $\vec{n}$, let $\mathcal{D}_{U}$ be the distribution of the symmetric product $U_{1_{k},\vec{n}}^{\top}U_{1_{k},\vec{n}}$ with $U \in \mathrm{U}(m)$. Let $\mathcal{D}_{X}$ be the distribution of the symmetric product $X^{\top}X$ with $X \sim \mathcal{N}(0,1/m)_{c}^{k\times 2n}$. In Conjecture~\ref{con:gaussian}, we conjecture that $\mathcal{D}_{U}$ and $\mathcal{D}_{X}$ become close in total variation distance when $n = o(\sqrt{m})$. However, precisely how close these two distributions are is crucial to whether or not anticoncentration translates between the two output probabily distributions. In what follows, we will refer to anticoncentration in the sense of \cref{eq:anticoncentration} as ``standard'' anticoncentration, and our definition of anticoncentration as ``moment-based.''

Ideally, we would be able to prove that statements about moment-based anticoncentration of $P_{X}(\vec{n})$ imply equivalent statements about moment-based anticoncentration of $P_{U}(\vec{n})$. However, under worst-case assumptions, we can only show that moment-based anticoncentration of $P_{X}(\vec{n})$ implies standard anticoncentration of $P_{U}(\vec{n})$. To understand this, let us fix some notation. Let also $\mathbbm{1}[\cdot]$ be an indicator function which is $1$ if the argument is true and $0$ if it is false. Let $\mathrm{d}\mu$ be the Lebesgue measure on $\mathbb{C}^{2kn}$ (as we consider $k\times 2n$ complex matrices) and $p_{U}(A)$, $p_{X}(A)$ be the respective probabilities of generating $A$ from $\mathcal{D}_{U}$ and $\mathcal{D}_{X}$.

Now, let the total-variation distance between $\mathcal{D}_{U}$ and $\mathcal{D}_{X}$ be less than $\delta$. Then 
\begin{align}
    \Pr_{U \in \mathrm U(m)}[P_{U}(\vec{n}) \geq \epsilon]
    &= \int \mathrm{d}\mu \, p_{U}(A) \mathbbm{1}[P_{A}(\vec{n})\geq \epsilon] \\
    &= \int \mathrm{d}\mu \, (p_{U}(A)-p_{X}(A)+p_{X}(A)) \mathbbm{1}[P_{A}(\vec{n})\geq \epsilon] \\
    &= \int \mathrm{d}\mu \, (p_{U}(A)-p_{X}(A))\mathbbm{1}[P_{A}(\vec{n})\geq \epsilon]+\int \mathrm{d}\mu\,p_{X}(A) \mathbbm{1}[P_{A}(\vec{n})\geq \epsilon] \\
    &\geq -2\delta + \Pr_{X \in \mathcal{G}}[P_{X}(\vec{n}) \geq \epsilon].
\end{align}
In this calculation, we have used the Radon-Nikodym theorem \cite{FollandRealAnalysis2007} to express the probability measures that define $\mathcal{D}_U$ and $\mathcal{D}_{X}$ as $p_{U}(A)\mathrm{d}\mu$ and $p_{X}(A)\mathrm{d}\mu$, respectively.
Therefore
\begin{equation}
    \Pr_{U \in \mathrm U(m)}\left[P_U(\vec n) \geq  \frac{\alpha}{|\Omega_{2n}|}\right] \geq \Pr_{X \in \mathcal{G}}\left[P_{X}(\vec{n}) \geq \frac{\alpha}{|\Omega_{2n}|}\right] - 2\delta  \geq (1- \alpha)^2 \frac 1 {m_{2}(k,n)} - 2\delta. 
\end{equation}
The final step follows from the Paley-Zygmund inequality for the approximate distribution. This proves that we can translate statements on anticoncentration as long as $2\delta$ is smaller than $m_{2}(k,n)^{-1}$, which, as we show in the main text, means $\delta = o(n^{-1/2})$. 

With this in mind, we can make the following more precise version of Conjecture~\ref{con:gaussian} such that, if it holds, moment-based weak anticoncentration of the approximate distribution implies standard weak anticoncentration of the exact distribution:
\addtocounter{conjecture}{-1}
\begin{conjecture}[Formal]\label{con:formal}
Let $\mathcal{D}_{U}$ be the distribution of the symmetric product $U_{1_{k},\vec{n}}^{\top}U_{1_{k},\vec{n}}$ with $U$ unitary and $\vec{n}$ some non-collisional outcome of a Gaussian Boson Sampling experiment. Let $\mathcal{D}_{X}$ be the distribution of the symmetric product $X^{\top}X$ with $X \sim \mathcal{N}(0,1/m)_{c}^{k\times 2n}$. Then, for any $k$ such that $1 \leq k \leq m$, and for any $\delta > 0$ such that $m\geq n^{2}/\delta$,
\begin{equation}
    \mathsf{tvd}(\mathcal{D}_{U}, \mathcal{D}_{X}) = O(\delta).
\end{equation}
Specifically, if $\delta = o(n^{-1/2})$, then $m \geq n^{5/2}$. 
\end{conjecture}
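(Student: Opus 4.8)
The plan is to reduce the claim about the $2n \times 2n$ symmetric products to a claim about the underlying $k \times 2n$ submatrices, to dispatch the small-$k$ regime with existing Gaussian-approximation results, and then to confront the large-$k$ regime, which is the genuinely hard case and the reason the statement is phrased as a conjecture. First I would note that $M \mapsto M^{\top}M$ is one fixed deterministic map applied to both $U_{1_{k},\vec{n}}$ and $X$, so the data-processing inequality for total-variation distance gives
\[
\mathsf{tvd}(\mathcal{D}_{U},\mathcal{D}_{X}) \leq \mathsf{tvd}\!\left(\mathrm{law}(U_{1_{k},\vec{n}}),\, \mathcal{N}(0,1/m)_{c}^{k \times 2n}\right).
\]
The right-hand side asks how well a $k\times 2n$ truncation of a Haar-random unitary is approximated in total variation by an i.i.d.\ complex Gaussian matrix of variance $1/m$, for which sharp bounds are available from Jiang \cite{jiang_how_2006,jiang_entries_2009} and the truncation lemma of Aaronson and Arkhipov \cite{aaronsonComputationalComplexityLinear2013}. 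These scale as the effective dimension $kn$ of the block over $m$, i.e.\ $O(kn/m)$, which already proves the statement whenever $k = O(n)$, since then $kn/m = O(n^{2}/m) = O(\delta)$.

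The difficulty lies in the regime $k = \omega(n)$, up to $k = m$. Here the bound $O(kn/m)$ is hopelessly lossy---it is vacuous once $k \gtrsim m/n$---because it discards the key structural fact that the symmetric product $W = U_{1_{k},\vec{n}}^{\top}U_{1_{k},\vec{n}}$ is a $2n\times 2n$ matrix carrying only $O(n^{2})$ degrees of freedom, \emph{independent} of $k$. The conjectured rate $n^{2}/m$ is exactly this effective dimension divided by $m$, so the heuristic is that $W$ must converge to $X^{\top}X$ at a $k$-independent rate. A useful anchor is the endpoint $k=m$: there $U^{\top}U$ is a symmetric unitary distributed as a COE matrix, so $W$ is a principal $2n\times 2n$ submatrix of a COE matrix, and the claim reduces to the Gaussian approximation of small corners of the circular orthogonal ensemble---a statement of the same flavor as the corner results used for the unstructured truncation.

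To reach all $k$ I would abandon data processing and analyze $W$ directly through the Gram--Schmidt (QR) coupling. Realizing the Haar unitary as the $Q$-factor of an $m\times m$ standard complex Gaussian $G$, and using column-permutation invariance to place the selected columns first, the columns of $U_{1_{k},\vec{n}}$ become the orthonormalizations of the first $2n$ Gaussian columns, truncated to $k$ rows. To leading order each such column is $g_{j}/\sqrt{m}$, so that $U_{1_{k},\vec{n}} \approx X$ and $W \approx X^{\top}X$ on this coupling. The corrections involve only the $2n$ selected columns---the normalizations $\|g_{j}\|^{2}$ deviating from $m$ by $O(\sqrt{m})$, and the orthogonalization overlaps among the at most $2n$ columns---which is precisely the mechanism that should produce the $k$-independent $n^{2}/m$ rate.

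The hard part will be upgrading this perturbative, $O(n^{2}/m)$-size control of $W$ into a genuine total-variation bound uniformly in $k$. A coupling of this kind naturally yields closeness in a Wasserstein sense or matching of low-order moments, neither of which controls total variation for continuous laws. Converting to total variation appears to require either a direct comparison of the densities of the contracted $2n\times 2n$ symmetric-product ensembles---as Jiang does for unstructured corners, but now for an object with no closed-form density---or a smoothing argument whose error does not reintroduce a dependence on $k$. Taming the accumulated orthogonalization corrections at total-variation resolution, without a hidden $k$-dependence creeping back in, is the central technical obstacle that would have to be overcome to turn this conjecture into a theorem.
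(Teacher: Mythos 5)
There is no proof of this statement in the paper to compare against: it is stated as a \emph{conjecture}, and the paper's only supporting discussion is the observation that it strengthens the hiding conjecture of Ref.~\cite{deshpandeQuantumComputationalAdvantage2022a} and is modeled on the analogous Fock Boson Sampling conjecture of Ref.~\cite{aaronsonComputationalComplexityLinear2013}, where the corresponding statement is \emph{proven} only for $m \geq n^{5+\epsilon}/\delta$ and the $m \geq n^{2}/\delta$ scaling is merely suspected. Your proposal is appropriately honest about the large-$k$ regime --- you correctly identify that a Gram--Schmidt coupling yields only Wasserstein-type or moment closeness, and that upgrading this to a $k$-independent total-variation bound is the open obstacle --- and your data-processing reduction and the observation that $W$ carries only $O(n^{2})$ degrees of freedom are both sound and consistent with why the conjectured rate is plausible.

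However, your claimed dispatch of the $k = O(n)$ regime contains a genuine gap. You assert that the literature (Jiang; Aaronson--Arkhipov) provides total-variation bounds between the $k \times 2n$ truncation of a Haar unitary and the i.i.d.\ Gaussian ensemble scaling as $O(kn/m)$, which would give $O(n^{2}/m) = O(\delta)$ when $k = O(n)$. No such rate is available. Jiang's results establish the \emph{threshold} for total-variation convergence (blocks of dimensions $o(\sqrt{m})$) without that quantitative rate, and the Aaronson--Arkhipov truncation theorem gives total-variation error $O(\delta)$ only under $m \gtrsim n^{5}/\delta$ (up to logarithmic factors) --- which is exactly why even the Fock Boson Sampling version of the $n^{2}/\delta$ statement remains conjectural. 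Achieving the rate $n^{2}/m$ is the entire open content of this conjecture, in every regime of $k$, including $k \leq n$; so the assertion that existing results ``already prove the statement whenever $k = O(n)$'' quietly assumes the conclusion. As it stands, your proposal is a reasonable research program and a correct diagnosis of the difficulties, but it proves the statement in no regime, and it could not have been expected to: the statement is open.
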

The motivation behind the choice of $m \geq n^{2}/\delta$ is based on the equivalent conjecture for Fock Boson Sampling in Ref.~\cite{aaronsonComputationalComplexityLinear2013}. There, the authors are able to prove the equivalent result for $m \geq n^{5+\epsilon}/\delta$ (for arbitrarily small, constant $\epsilon$), but they suspect that the result can be pushed further to $m \geq n^{2}/\delta$. We note that this choice makes our formal conjecture slightly stronger than the equivalent formal conjecture in Ref.~\cite{deshpandeQuantumComputationalAdvantage2022a}. 

As we have shown, in order to translate our results on moment-based weak anticoncentration from the approximate to the true distribution in the worst case, we require $\delta = o(n^{-1/2})$. Therefore, in order to translate statements about anticoncentration, the formal version of our conjecture requires $m \geq n^{5/2}$. 

However, it is worth noting that we do not believe that this worst-case scenario truly reflects the way in which $P_{X}(\vec{n})$ approaches $P_{U}(\vec{n})$, i.e., where all of the error is concentrated on a single probability. In general, the intuition is that if hiding holds, then it is more likely that the errors are more evenly distributed amongst all of the exponentially many output probabilities. Using this intuition, each individual probability only receives an error of approximately $\delta/|\Omega_{2n}|$. If this is true, then we can show that moment-based weak anticoncentration of $P_{X}(\vec{n})$ does actually imply the same for $P_{U}(\vec{n})$. Specifically, say that $P_{U}(\vec{n}) \approx P_{X}(\vec{n}) \pm \delta/|\Omega_{2n}| \approx P_{X}(\vec{n}) \pm \delta \E[P_{X}(\vec{n})]$ (as per \cref{sec:asymptotics}). Then
\begin{align}
 \frac{\E[P_{U}(\vec{n})^{2}]}{(\E[P_{U}(\vec{n})])^{2}} &\approx \frac{\E[(P_{X}(\vec{n}) \pm \delta \E[P_{X}(\vec{n})])^{2}]}{(\E[P_{X}(\vec{n}) \pm \delta \E[P_{X}(\vec{n})]])^{2}}\\
 &= \frac{\E[P_{X}(\vec{n})^{2}]\pm2\delta\E[P_{X}(\vec{n})]^{2}+\delta^{2}\E[P_{X}(\vec{n})]^{2}}{(1\pm\delta)^{2}\E[P_{X}(\vec{n})]^{2}} \\
 &\approx \frac{1}{(1\pm\delta)^{2}}  \frac{\E[P_{X}(\vec{n})^{2}]}{(\E[P_{X}(\vec{n})])^{2}} + \frac{\pm 2\delta + \delta^{2}}{(1\pm \delta)^{2}} \\
 &= \frac{1}{(1\pm\delta)^{2}}\frac{\E[P_{X}(\vec{n})^{2}]}{(\E[P_{X}(\vec{n})])^{2}} + 1 - \frac{1}{(1\pm \delta)^{2}} \\
 &\leq \frac{1}{(1-\delta)^{2}}  \frac{\E[P_{X}(\vec{n})^{2}]}{(\E[P_{X}(\vec{n})])^{2}}+1.
\end{align}
In our case, where the normalized second moment of $P_{X}(\vec{n})$ scales at least polynomially in $n$, and $\delta$ scales inverse polynomially in $n$, weak anticoncentration or lack of anticoncentration of $P_{X}(\vec{n})$ in terms of the normalized second moment adequately translates to $P_{U}(\vec{n})$ as well. Note that, in this case, we are assuming that $\delta$, which is the total variation distance between the distributions of matrices, extends to a bound on the total variation distance between the probabilities themselves. This intuitively arises from the fact that any map from the distribution of the matrices to probabilities must be bounded, meaning we can translate the total variation distance from one to the other (however, formalizing this would require dealing with some subtleties induced by the fact that the hafnian of a product of Gaussians is not technically bounded, but any large hafnians only arise with extremely small probabilities).

\section{Scattershot Boson Sampling Explanation of the Transition in Anticoncentration}

In Scattershot Boson Sampling (SBS), the setup is as follows. 
$m = \omega(n^2)$ two-mode squeezed states with squeezing parameter $r$ are prepared. 
The photon number distribution of the two-mode squeezed states is supported on Fock states of the form $\ket n \ket n$ for $n \in \mathbb N_0$. 
One half of each two-mode squeezed state is then measured in the Fock basis, yielding, with high probability, an outcome $n_i \in \{0,1\}$ (assuming $r$ is small enough). 
Collecting outcomes in the vector $\vec n = (n_1, \ldots, n_m)$, the other half of the input modes is now in the postselected state $\ket {\vec n} = \bigotimes_{i=1}^m \ket {n_i}$. 
The outcome probabilities after passing this input state through the linear optical unitary $U$ and measuring in the Fock basis yielding outcome $\vec o = (o_1, \ldots, o_m)$, $o_i \in \{0,1\}$ is then given by 
\begin{align}
    P_U(\vec n, \vec o) = |\per(U_{\vec n, \vec o})|^2
\end{align}
of the submatrix $U_{\vec n, \vec o}$ in which we select the rows and columns according to the indices with nonzero entries in $\vec n$ and $\vec o$.
But conditioned on input and output states being collision-free and the hiding property, the distribution of matrices $U_{\vec n, \vec o}$ equals that of the Boson Sampling submatrices $U_{\vec 1_n, \vec o}$, where the photons in the input state are by convention in the first $n$ modes. 
The properties of Scattershot Boson Sampling postselected on collision-free outcomes in a fixed photon number sector are therefore equal to the properties of standard Boson Sampling. 

We now argue that this equivalency hinges essentially on the fact that at least $\omega(n^2)$ of the input modes are squeezed. 
To this end, consider a modification of Scattershot Boson Sampling in which only $k$ out of the $m$ modes are prepared in one half of a two-mode squeezed state, while the remaining $m-k$ modes are prepared in the vacuum state. 
This closely resembles the GBS setting, of course. 
Let us also consider a squeezing parameter $r$ of every two-mode squeezed state chosen such that the mean photon number after postselection is given by $n$. 
To achieve this, we pick the mean photon number per mode, which is given by $\sinh^2(r)$ to be equal $n/k$ to obtain a total of $k \sinh^2 r = n$ photons on average. 
This ensures that in the postselection we end up with $n$ photons with high probability. 

Recall that a two-mode squeezed vacuum state with squeezing parameter $r$ and phase $\phi$ has a Fock expansion given by
\begin{equation}
    \ket{\mathrm{TMSV}} = \frac{1}{\cosh(r)}\sum_{\ell = 0}^{\infty} (-e^{i\phi}\tanh r)^{n}\ket{nn},
\end{equation}
thus leading to a probability of measuring $\ell$ photons in one mode of $\tanh^{2\ell}r/\cosh^{2}r$.
Therefore, if the input consists of $k$ two-mode squeezed vacuum states, then the probability that, after measuring one half of each state, one observes a collision is
\begin{equation}
    \Pr[\text{collision}] = 1-\left(\frac{1}{\cosh^{2}r} + \frac{\tanh^{2}r}{\cosh^{2}r}\right)^{k} = 1-\left(\frac{1}{1+n/k} + \frac{n/k}{(1+n/k)^{2}}\right)^{k} = 1 - \left(1-\left(\frac{n/k}{1+n/k}\right)^{2}\right)^{k}. 
\end{equation}
We can rewrite this via Taylor series as
\begin{equation}
    \Pr[\text{collision}] = 1-\exp\left(-\frac{n^{2}}{k} + kO(n/k)^{3})\right)
\end{equation}
assuming $k = \omega(n)$. This collision probability remains lower bounded by a constant for $k = O(n^2)$, but vanishes for any $k = \omega(n^2)$. 
Thus, the probability of a collision in the input state of SBS remains high until $k = \Theta(n^2)$ and decays then. 
But because in SBS the roles of the (postselected) input state and the output state are symmetric, a collision implies a failure of hiding and, therefore, a failure of anticoncentration in the regime $k = O(n^2)$.  
Conversely, for $k = \omega(n^2)$ we believe that hiding holds \cite{jiang_how_2006,aaronsonComputationalComplexityLinear2013}, and hence Lemma 8.8 of \textcite{aaronsonComputationalComplexityLinear2013} shows weak anticoncentration for SBS with the inverse average collision probability $p_2 = 1/n $ in this regime. 

This shows that generalized SBS with a variable number of input squeezed states undergoes a transition in anticoncentration as we find it here for the case of GBS. 
It is not at all clear that the transition in SBS implies a transition in GBS, however, as GBS does not involve postselection. 
Indeed, in SBS, the anticoncentration coincides with---or rather \emph{is}---a transition in the hiding property. 
In GBS, in contrast, hiding is conjectured to hold for all $k$, while we do see the transition in anticoncentration.
The situation in GBS is not immediately comparable to that in this modified SBS scenario because the input single-mode squeezed states are supported on even numbers of photons, and therefore any nonzero photon number input states are collision-full. Therefore, as mentioned in the discussion in the main text, the possible connections outlined here deserve future consideration. 
\end{document}